\documentclass{article}

\usepackage{arxiv}

\usepackage{natbib}
\usepackage{times}
\usepackage{float}
\usepackage{soul}
\usepackage{url}
\usepackage{hyperref}
\usepackage{caption}
\captionsetup{belowskip=0pt}
\usepackage{graphicx}
\usepackage{amsthm}
\usepackage{booktabs}
\usepackage{amsmath,amssymb,amsfonts}
\usepackage{subcaption}
\usepackage{bbm}
\usepackage{physics}
\usepackage{bm}
\usepackage{framed}
\usepackage{multirow}
\usepackage{xcolor}
\usepackage{blindtext}
\usepackage{graphicx}
\usepackage{textcomp}
\usepackage[noend]{algpseudocode}
\usepackage{algorithm}
\setcounter{secnumdepth}{3}
\theoremstyle{definition}
\newtheorem{thm}{Theorem}

\title{Robust Fraud Detection via Supervised Contrastive Learning}


\author{ \href{https://orcid.org/0000-0002-2595-2759}{\includegraphics[scale=0.06]{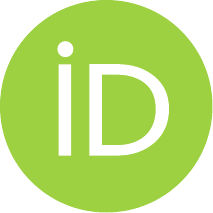}\hspace{1mm}Vinay M.S.}\\
	University of Arkansas\\
	Fayetteville, AR 72701, USA \\
	\texttt{vmadanbh@uark.edu} \\
	\And
	\href{https://orcid.org/0000-0001-6816-419X}{\includegraphics[scale=0.06]{orcid.pdf}\hspace{1mm}Shuhan Yuan} \\
	Utah State University\\
	Logan, UT 84322, USA \\
	\texttt{Shuhan.Yuan@usu.edu} \\
        \And
	\href{https://orcid.org/0000-0002-2823-3063}{\includegraphics[scale=0.06]{orcid.pdf}\hspace{1mm}Xintao Wu} \\
	University of Arkansas\\
	Fayetteville, AR 72701, USA \\
	\texttt{xintaowu@uark.edu} \\
}

\date{}


\hypersetup{
pdftitle={A template for the arxiv style},
pdfsubject={q-bio.NC, q-bio.QM},
pdfauthor={David S.~Hippocampus, Elias D.~Striatum},
pdfkeywords={First keyword, Second keyword, More},
}

\begin{document}
\maketitle

\begin{abstract}
Deep learning models have recently become popular for detecting malicious user activity sessions in computing platforms. In many real-world scenarios, only a few labeled malicious and a large amount of normal sessions are available. These few labeled malicious sessions usually do not cover the entire diversity of all possible malicious sessions. In many scenarios, possible malicious sessions can be highly diverse. As a consequence, learned session representations of deep learning models can become ineffective in achieving a good generalization performance for unseen malicious sessions. To tackle this open-set fraud detection challenge, we propose a robust supervised contrastive learning based framework called \textit{ConRo}, which specifically operates in the scenario where only a few malicious sessions having limited diversity is available. ConRo applies an effective data augmentation strategy to generate diverse potential malicious sessions. By employing these generated and available training set sessions, ConRo derives separable representations w.r.t open-set fraud detection task by leveraging supervised contrastive learning. We empirically evaluate our ConRo framework and other state-of-the-art baselines on benchmark datasets.  Our ConRo framework demonstrates noticeable performance improvement over state-of-the-art baselines.
\end{abstract}

\keywords{fraud detection\and contrastive learning\and open-set\and augmentation}

\section{Introduction}

Computing platforms, such as social networking sites and cloud systems, experience large volumes of malicious or fraudulent activities due to the anonymity and openness of the Internet.
It is critical to identify such malicious activities in order to protect legitimate users. In practice, the activities of a user are usually modeled as an activity session. For example, in a computer system, an activity session is a sequence of user activities starting with log-in and ending with log-out. A popular approach for detecting malicious sessions is through deep learning models \citep{YUAN2021102221}. The main idea is to derive session representations by making normal sessions deviate from malicious ones in the representation space for deriving anomaly scores.

In many real-world fraud detection scenarios,  only a few labeled malicious and an abundance of normal sessions are available~\citep{YUAN2021102221,DBLP:conf/cikm/YuanZWT20}. These few available malicious sessions usually do not sufficiently cover the entire diversity of all possible malicious sessions. It is well known that malicious sessions can be highly diverse~\citep{YUAN2021102221}. Many attackers keep evolving their activity patterns to avoid detection. Such malicious sessions are usually not  available for training a deep learning model. Suppose a deep learning model is trained by utilizing a few available  malicious sessions. Now in the testing phase, due to the large diversity in the possible malicious sessions, the test set distribution might be different from the training set distribution. For example, the training set might contain only a few types of malicious sessions, and the test set might include other types of malicious sessions that are not observed in the training set. Hence, the learned session representations by using these few malicious sessions in the training set might not be discriminative enough to achieve good generalization on detecting unseen malicious sessions. Clearly, the fraud detection task is essentially an \textit{open-set} detection task.

The existing deep anomaly detection approaches which operate on the setting of a few available anomalous samples, employ metric learning~\citep{pmlr-v80-ruff18a,DBLP:conf/iclr/RuffVGBMMK20} or deviation loss based learning~\citep{DBLP:journals/corr/abs-2108-00462,10.1145/3292500.3330871}. These approaches attempt to obtain a decision boundary by using a few available anomalies. However, these approaches can easily overfit w.r.t seen anomalies and can suffer from poor generalization performance if the anomalies encountered during the testing stage deviate from the training set anomalies~\citep{9879727}. To address this challenge, recently,~\citet{9879727} presented a novel open-set deep anomaly detection approach. They train their model to detect unseen anomalies by jointly employing: (1) a data augmentation strategy through which they generate augmented samples that can closely resemble unseen anomalies, and (2) learning in the latent residual representation space. However, their approach has been specifically designed to operate on image data. In the fraud detection domain, we have additional challenges when compared to the image domain. For example in image data, the normal samples are assumed to have shared features. However, in the fraud detection domain, even normal sessions can also exhibit large diversity. Therefore, learning separable representations for the open-set fraud detection task is challenging.

We address these challenges by leveraging contrastive learning. The vanilla contrastive learning model operates in a self-supervised format. The main goal is to push a sample and its augmented versions closer and contrast with other samples and their corresponding augmented versions in the representation space. However, the employed augmentation strategies are constrained to produce augmented samples that are closely similar to their original versions. Due to this constraint, we cannot employ strong augmentation strategies to generate diverse malicious sessions. Recently,~\citet{DBLP:conf/nips/KhoslaTWSTIMLK20} presented a supervised contrastive learning model which extends contrastive learning to the supervised setting. The main goal is to push the samples belonging to the same class together and contrast with other class samples in the representation space. Due to this class-specific clustering effect in the representation space, the session diversity challenge in our fraud detection task can be effectively addressed. Hence, we leverage this supervised contrastive learning model to build our new robust fraud detection framework called \textit{ConRo}. 

However, the challenge here is to generate those augmented sessions which are similar and can be effective replacements for unseen malicious sessions. ConRo addresses this challenge by employing a two-stage training framework. In the first stage, ConRo trains the session encoder by using the available training set which contains a few malicious and a large amount of normal sessions. Specifically, it performs first stage training by employing a combination of both supervised contrastive and Deep Support Vector
Data Description (DeepSVDD)~\citep{pmlr-v80-ruff18a} losses. Through the supervised contrastive loss, ConRo learns shared features for normal sessions in the representation space, and through  DeepSVDD loss, it pushes normal sessions in a minimum volume hyper-sphere in the representation space. After this stage, ConRo creates a representation space with suitable topological properties which aid in generating potentially diverse malicious sessions. In the second stage, by employing suitable augmentation strategies, ConRo generates diverse potential malicious sessions in the representation space, filters those generated sessions which are false positives/normal, and further trains the encoder through supervised contrastive loss by employing available and generated potential malicious sessions. We summarize our main contributions below:

\begin{itemize}
    \item We propose a novel framework called ConRo which is specifically designed for the open-set fraud detection task. Our ConRo framework operates in a scenario where only a few malicious and a large amount of normal sessions are available. 
    \item We propose a Long-Short Term Memory (LSTM) based session encoder which is trained by employing both supervised contrastive and DeepSVDD losses.   
    \item We propose a data augmentation strategy to generate diverse potential malicious sessions in the representation space. We propose a strategy to filter generated false positive sessions. 
    \item We theoretically analyze the generalization performance of our ConRo framework and highlight important factors influencing its performance. We present an empirical study on three benchmark fraud detection datasets: CERT~\citep{DBLP:conf/sp/GlasserL13}, UMD-Wikipedia~\citep{10.1145/2783258.2783367}, and Open-stack~\citep{DBLP:conf/ccs/Du0ZS17} in which, we show superior performance of our ConRo framework over state-of-the-art baselines.  
\end{itemize}

\section{Related Work}

{\bf \noindent Anomaly Detection.}
Anomaly detection is to detect data that significantly deviate from the majority of data ~\citep{10.1145/3439950}. Recently, many deep anomaly detection approaches are developed by leveraging deep neural networks to learn representations of data so that, anomalies can be easily differentiated from the normal samples ~\citep{10.1145/3439950,ruff2021unifying}. One common setting of anomaly detection assumes the availability of normal samples and aims to learn a decision boundary based on the normal data distribution~\citep{pmlr-v80-ruff18a, DBLP:conf/iclr/RuffVGBMMK20, DBLP:conf/kdd/PangCCL18}.~\citet{10.1145/3292500.3330871,DBLP:journals/corr/abs-2108-00462} proposed an end-to-end anomaly detection framework called \textit{deviation network} which combines representation learning with anomaly scoring. However, all these deep learning-based anomaly detection  approaches have been designed for closed-set anomaly detection task. In our empirical analysis study, we select some of these  approaches~\citep{pmlr-v80-ruff18a,DBLP:conf/iclr/RuffVGBMMK20,10.1145/3292500.3330871,DBLP:journals/corr/abs-2108-00462} as baselines, and show that they fail to deliver noticeable results on open-set fraud detection task.   

\noindent{\textbf{Insider Threat Detection}}. It is a specific case  of fraud detection wherein, the frauds are committed by organizational insiders. 
Deep learning based approaches have become popular in detecting insider threats. We direct the interested readers to~\citet{YUAN2021102221} for a comprehensive survey on deep learning based insider threat detection approaches.  All these deep learning based approaches have not specifically addressed the dataset imbalance challenge in detecting insider threats. Recently, many deep learning based approaches~\citep{DBLP:conf/cikm/YuanZWT20, DBLP:conf/dasfaa/VinayYW22,9776049,DBLP:journals/corr/abs-2103-04475,9789917}
have specifically addressed the dataset imbalance challenge.  However, all these approaches address the closed-set fraud detection task. 

{\bf \noindent Open-Set Recognition (OSR).}~\citet{DBLP:journals/corr/abs-2110-14051} have extensively discussed about contemporary OSR approaches.~\citet{DBLP:conf/kdd/PangHSC21} have addressed anomaly detection in the OSR scenario by employing unlabelled samples.~\citet{DBLP:conf/iclr/HendrycksMD19} trained their model by exposing it to a small set of unseen class samples. However, in our fraud detection task, we do not utilize unlabelled~\citep{DBLP:conf/kdd/PangHSC21} or unseen~\citep{DBLP:conf/iclr/HendrycksMD19} malicious sessions for learning.~\citet{9879727} proposed an open-set anomaly detection framework that learns disentangled representations for different groups of anomalies.  In our empirical study, we select this open-set anomaly detection approach~\citep{9879727} as a baseline and show that it under-performs on open-set fraud detection task. Recently,~\citet{10.1145/3580305.3599302} proposed an open-set anomaly detection framework which operates in the semi-supervised setting. However, our fraud detection task does not operate in the semi-supervised setting.

\noindent{\textbf{Contrastive Learning.}}
~\citet{DBLP:journals/corr/abs-2011-00362} presented an in-depth discussions on the applications of self-supervised contrastive learning on computer vision and NLP domains. In the literature, there is no work studying benefits of supervised contrastive learning for the open-set fraud detection task.

\section{ConRo Framework}

The user activities are modeled through activity sessions. Each session can consist of $T$ user activities. Let $e_{i_t}(1\leq t\leq T)$ denote the $t^{th}$ activity of the $i^{th}$ session. Each activity in a session is represented by an embedding vector, which can be trained based on the word-to-vector model. Let $\mathbf{x}_{i_{t}}\in \mathbb{R}^{d}$ denote the word-to-vector representation of activity $e_{i_{t}}$, where $d$ denotes the number of representation dimensions. Here, $\mathbf{x}_i=\{\mathbf{x}_{i_t}\}_{t=1}^{T}$ denotes the \textit{raw representation} of the $i^{th}$ session. Let $\mathcal{X}$ and $\mathcal{Y}$ denote the raw input representation of sessions and label set, respectively. Here, $\mathcal{Y}=\{0,1\}$ where $y=0$ and $y=1$ denote normal and malicious sessions, respectively. Let $\mathcal{D}$ denote the test set distribution over $\mathcal{X}\times \mathcal{Y}$ wherein, the test samples are drawn from $\mathcal{D}$. The training set $\mathcal{T}$ contains a large amount of normal and a few malicious sessions. Let $\mathcal{T}^0$ and $\mathcal{T}^1$ denote sets of normal and malicious sessions in $\mathcal{T}$, respectively. The malicious sessions sampled from $\mathcal{D}$ will also contain those unseen malicious sessions which are not present in $\mathcal{T}^1$.  
Our ConRo framework has an encoder network that maps a session from its raw representation $\mathbf{x}$ to an \textit{encoded representation} vector $\mathbf{z}$. We adopt LSTM as the foundation of our encoder to derive the encoded session representations. Our encoder consists of two hidden layers with the same dimensions. The hidden representations derived from the top layer of LSTM for the activities in the session $\mathbf{x}_i$ are denoted as $\{\mathbf{h}_{i_t}\}_{t=1}^{T}$. Here, $\mathbf{h}_{i_t}\in \mathbb{R}^{d}$. Then, the encoded session representation $\mathbf{z}_i\in \mathbb{R}^{d}$ is computed as
$\mathbf{z}_i= \frac{1}{T}\sum_{t=1}^T\mathbf{h}_{i_t}$.

The main challenge which we are addressing is to design  a procedure to obtain malicious sessions which are sampled from the test set distribution $\mathcal{D}$. To address this challenge, we construct potential malicious sessions which can be similar to malicious sessions sampled from $\mathcal{D}$. There are two main objectives for generating these potential malicious sessions:

\begin{enumerate}
  \item \textbf{MO1.} Malicious sessions usually form multiple clusters in the encoded representation space~\citep{JU2020167}. Malicious sessions belonging to the same cluster usually share close similarities. Hence, we need to generate potential malicious sessions which are similar to a seen malicious session $\mathbf{x}_i\in \mathcal{T}^1$.
  \item \textbf{MO2.} Suppose there are $K$ malicious session clusters. However, the training set $\mathcal{T}$ might only contain $N(N<K)$ session clusters, and sessions belonging to remaining $K-N$ clusters are not present in $\mathcal{T}$. Note that the malicious sessions from these $K-N$ unseen clusters can diverge significantly from seen malicious sessions. We need to generate potential malicious sessions which belong to those $K-N$ clusters to effectively train our encoder. 
\end{enumerate}

ConRo achieves these main objectives by employing a two stage encoder training procedure. An illustration of this training procedure is shown in Figure \ref{fig:illustration_conro}.  We provide detailed descriptions of both these stages below.

\begin{figure*}[htbp]
\centering
\begin{subfigure}{.35\textwidth}
  \centering
  \includegraphics[width=0.95\linewidth]{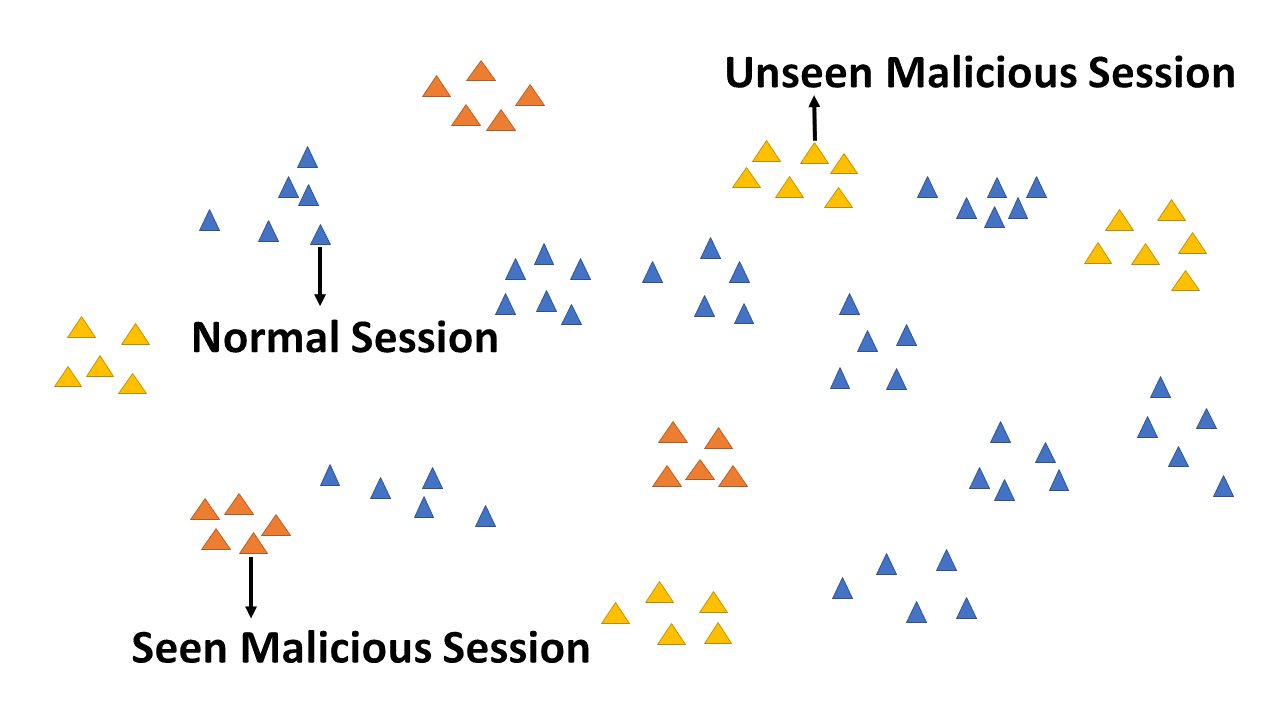}
  \caption{Before training}
  \label{fig:conro_illustration_raw}
\end{subfigure}%
\begin{subfigure}{.35\textwidth}
  \centering
  \includegraphics[width=0.95\linewidth]{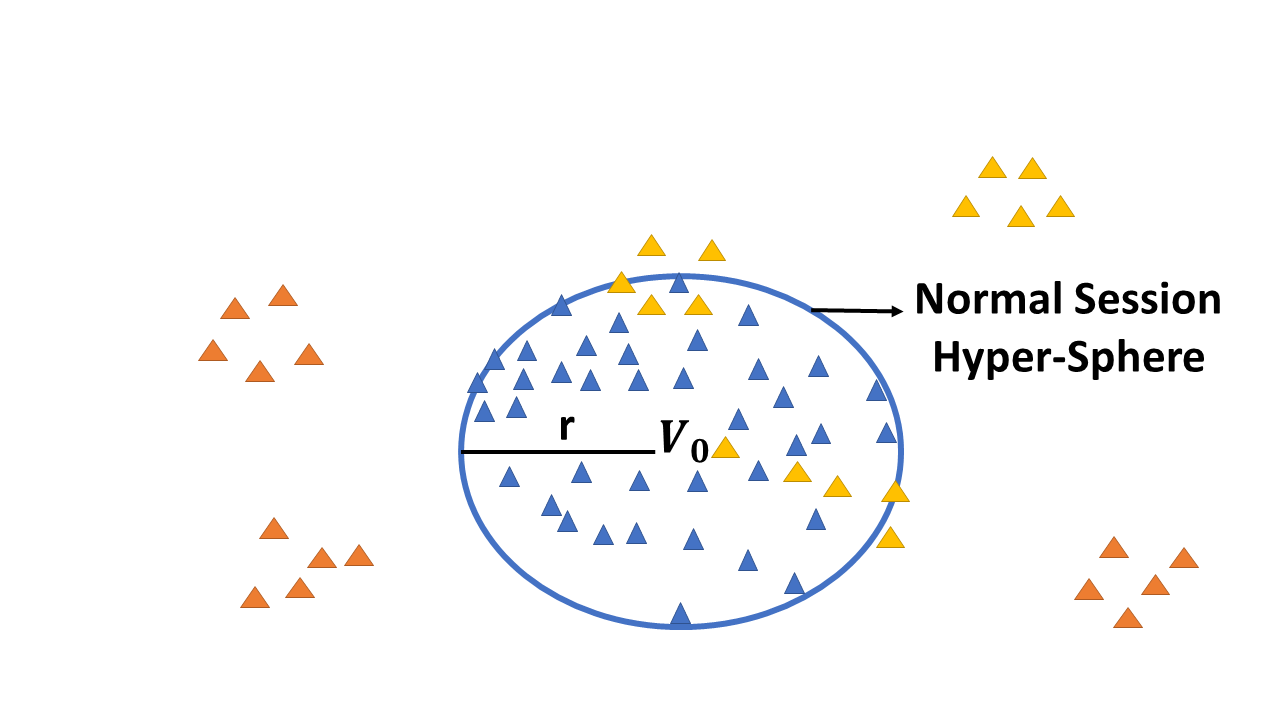}
  \caption{After first stage training}
  \label{fig:conro_illustration_after_stage1_training}
\end{subfigure}
\begin{subfigure}{.35\textwidth}
  \centering
  \includegraphics[width=0.95\linewidth]{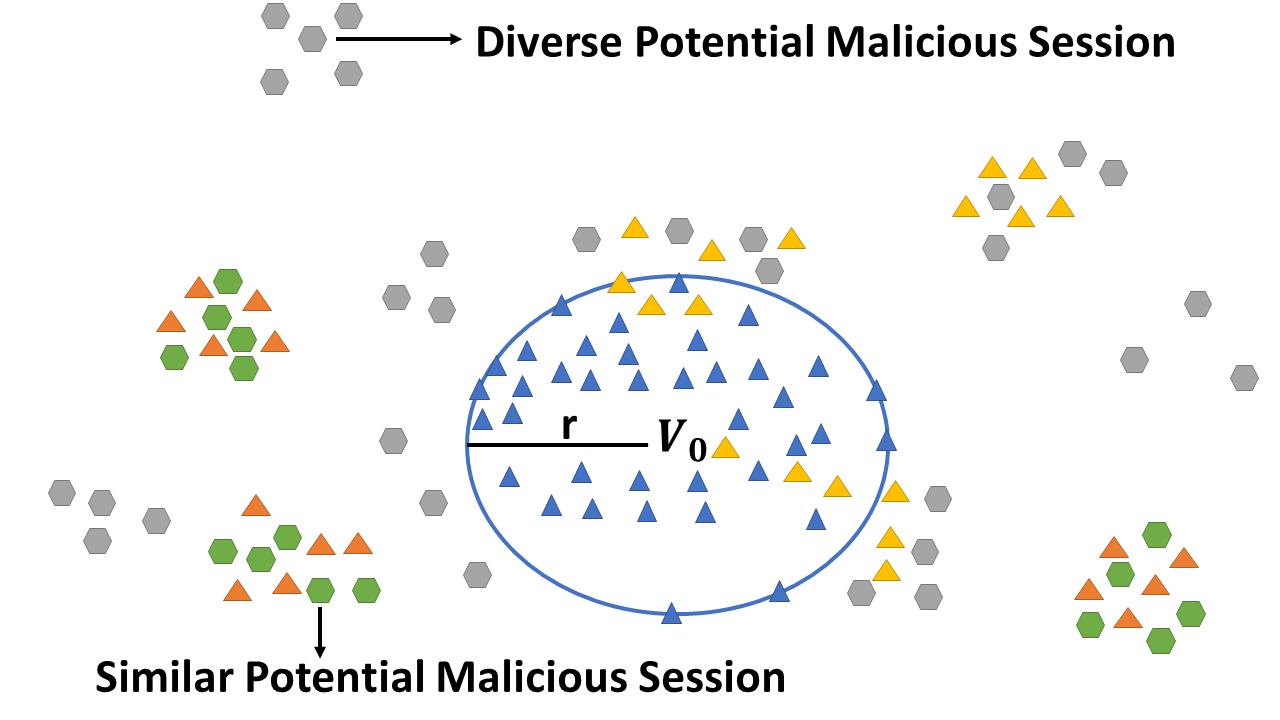}
  \caption{Potential malicious session generation}
  \label{fig:conro_illustration_session_generation}
\end{subfigure}
\begin{subfigure}{.35\textwidth}
  \centering
  \includegraphics[width=0.95\linewidth]{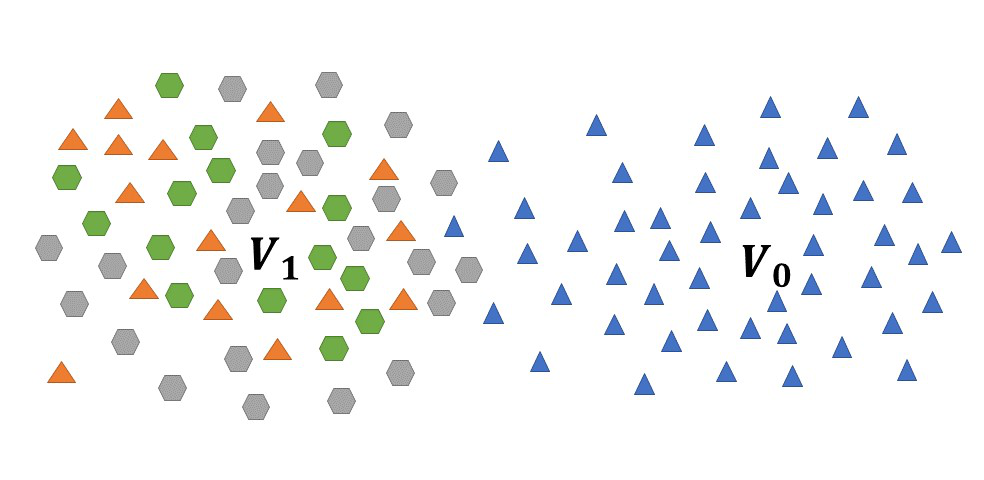}
  \caption{After second stage training}
  \label{fig:conro_illustration_after_stage2_training}
\end{subfigure}
\caption{Illustration of encoded representation space for ConRo.}
\label{fig:illustration_conro}
\end{figure*}

\subsection{First Stage}
In the first stage, our encoder achieves two goals: (1) It learns
shared features for normal sessions and learns to contrast normal sessions with seen malicious sessions in the encoded representation space. As a consequence, our encoder learns separable representations w.r.t to normal and seen malicious sessions. (2) It compresses normal session representations inside a minimum volume hyper-sphere in the encoded representation
space. To achieve the first goal, we leverage the idea of supervised contrastive learning, which can learn separable representations w.r.t to normal and seen malicious sessions. Then, to achieve the second goal, we leverage the DeepSVDD loss~\citep{pmlr-v80-ruff18a}, which pushes the normal samples inside a minimum volume hyper-sphere.

\noindent{\textbf{Supervised contrastive loss.}}  We construct a training batch denoted as $S=\{\mathbf{x}_i\}_{i=1}^R$ by obtaining $R$ random samples from $\mathcal{T}$. Since ConRo is specifically designed to operate on imbalanced training data, in order to effectively contrast malicious sessions with normal sessions, for each training batch $S$, we create a corresponding auxiliary batch $S^1=\{\mathbf{x}_i^1\}_{i=1}^M$, by randomly sampling $M$ malicious sessions from $\mathcal{T}^1$. We leverage a supervised contrastive loss function similar to the one presented by~\citet{DBLP:conf/nips/KhoslaTWSTIMLK20}. This loss function is given by:

\begin{align}
\label{eq:variant2_stage1_supervised_contrastive_loss}
\mathcal{L}^{Sup}=\frac{1}{R}
\sum_{i=1}^{R}
\left(1-y_i\right)\left(\frac{1}{|B^0(\mathbf{x}_i)|}\sum_{\mathbf{x}_p\in B^0(\mathbf{x}_i)}l\left(\mathbf{z}_i,\mathbf{z}_p,A(\mathbf{x}_i)\right)\right)
\end{align}

Here, the set $A(\mathbf{x}_i)$ is defined as $\left(S\cup S^1\right)-\{\mathbf{x}_i\}$,
and the set $B^0(\mathbf{x}_i)=\left\{\mathbf{x}_p\in A(\mathbf{x}_i)|y_p=0\right\}$ indicates samples $\mathbf{x}_p$  in $A(\mathbf{x}_i)$ with labels $y_p=0$. The individual loss $l\left(\mathbf{z}_i,\mathbf{z}_p,A(\mathbf{x}_i)\right)$ between the pair $(\mathbf{x}_i,\mathbf{x}_p)$ is defined as:

\begin{equation}
\label{eq:pair_supervised_contrastive_loss}
l\left(\mathbf{z}_i,\mathbf{z}_p,A(\mathbf{x}_i)\right)=-\log\left(\frac{exp(\cos{\left(\mathbf{z}_i\cdot \mathbf{z}_p\right)}/\alpha)}{\sum_{\mathbf{x}_j\in A(\mathbf{x}_i)}exp(\cos{\left(\mathbf{z}_i\cdot \mathbf{z}_j\right)}/\alpha)}\right),
\end{equation}
where $\alpha$ denotes the temperature parameter.

\noindent{\textbf{DeepSVDD loss.}} We leverage a DeepSVDD loss function which is similar to the one presented by~\citet{pmlr-v80-ruff18a}.  Let $\mathbf{v}_0=\frac{1}{R_0}\sum_{i=1}^{R}(1-y_i)\mathbf{z}_i$ denote the estimated center of normal sessions in the encoded representation space and $R_0=\sum_{i=1}^R\mathbb{I}(y_i=0)$, where $\mathbb{I}(\cdot)$ is an indicator function. This loss function is given by: 

\begin{align}
\label{eq:variant2_stage1_deepsvdd_loss}
\mathcal{L}^{SV} = \frac{1}{R}
\sum_{i=1}^R\left(1-y_i\right)\left(||\mathbf{z}_i-\mathbf{v}_0||_2\right)
\end{align}

The loss function for the first stage is given by:

\begin{equation}
\label{eq:variant2_stage1_loss}
\mathcal{L}_1=\mathcal{L}^{Sup}+\mathcal{L}^{SV}
\end{equation}

To effectively address the session diversity challenge, we employ an alternating approach to optimize our encoder through $\mathcal{L}_1$, instead of joint optimization. In our ablation analysis study described in Section \ref{sec:ablation_analysis}, we show that the alternating optimization approach provides significant performance improvement over the joint optimization approach. For each training batch $S=\{\mathbf{x}_i\}_{i=1}^R$, we first train our encoder through $\mathcal{L}^{Sup}$. As a result, we force the encoder to learn shared features for normal sessions and contrast with seen malicious sessions in the encoded representation space (goal 1). Then, by using the same batch $S$, we train the encoder through $\mathcal{L}^{SV}$, which forces the encoder to compress normal session representations inside a minimum volume hyper-sphere in the encoded representation space (goal 2). 

After the first stage, an encoded representation space having certain topological properties is created (refer to Figures \ref{fig:conro_illustration_raw} and \ref{fig:conro_illustration_after_stage1_training}). For example, in the encoded representation space, most of the normal sessions are pushed inside a minimum volume hyper-sphere.  The seen malicious sessions are found outside this hyper-sphere in multiple clusters and pulled apart from the normal session hyper-sphere. An attractive option now is to directly deploy our first stage trained encoder for test case inference wherein, a test case session $\mathbf{x}$ is predicted as malicious if $||\mathbf{z}-\mathbf{v}_0||>r$ where $r$ is the radius of the normal session hyper-sphere. Otherwise, it is predicted as normal. Note that after the first stage, normal sessions have been contrasted with only seen malicious sessions belonging to $\mathcal{T}^1$, and not unseen malicious sessions. Thus, it is possible that many unseen malicious session clusters overlap with the normal session hyper-sphere, especially those which are similar to normal sessions (refer to Figure \ref{fig:conro_illustration_after_stage1_training}). Hence, by directly deploying the first stage trained encoder for test case inference might negatively affect generalization performance. In our ablation analysis study, we show that this option provides sub-optimal results. Hence, we require a strategy to generate diverse potential malicious sessions which can be similar to unseen malicious sessions in the encoded representation space, further train our encoder by employing these sessions, and learn separable representations w.r.t to the open-set fraud detection task. 

\subsection{Second Stage}

In the second stage, for each seen malicious session $\mathbf{x}_i\in\mathcal{T}^1$, we achieve the first objective (MO1) by generating potential/augmented malicious sessions closely resembling $\mathbf{x}_i$. The first objective can be achieved by leveraging existing augmentation techniques~\citep{DBLP:conf/icml/VermaLKPL21}. However, the second objective (MO2) cannot be achieved straightforwardly due to the lack of required augmentation strategies in the literature~\citep{DBLP:journals/corr/abs-2011-00362}. We achieve the second objective by using an effective augmentation strategy to generate a large amount of potential malicious sessions that span diverse regions of the encoded representation space (refer to Figure \ref{fig:conro_illustration_session_generation}). Note that many of these generated sessions might fall inside the normal session hyper-sphere. In such scenario, we have two choices: (1) a pessimistic choice where we consider such sessions as potential malicious sessions, and (2) an optimistic choice where we consider such sessions as false positives, and filter these sessions. Since deep learning models are sensitive to label noise, choosing the pessimistic choice could result in reduced generalization performance. Hence, we opt for the optimistic choice. In our ablation analysis study, we show that making the pessimistic choice yields sub-optimal results. We design a session filtering mechanism to filter such generated sessions which fall inside the normal session hyper-sphere. Since we generate a large amount of diverse potential malicious sessions, many of them can be located just outside the boundary of the normal session hyper-sphere, which could partially cover unseen malicious session clusters that overlap this hyper-sphere and aid in learning separable representations w.r.t to open-set fraud detection task. We do not individually train our encoder by considering normal sessions from $\mathcal{T}$ to avoid over-fitting.

\noindent{\textbf{First objective}}. We generate \textit{similar potential malicious sessions} which are similar to a seen malicious session $\mathbf{x}_i\in\mathcal{T}^1$. Recently,~\citet{DBLP:conf/icml/VermaLKPL21} proposed a mix-up based data augmentation strategy for sequential data. Their augmentation strategy is inspired by the concept of \textit{convex sets} and generates augmented samples that are similar to their original version. Specifically, they generate augmented samples by performing a mix-up operation on the encoded representations of original samples. Hence, we leverage a mix-up based augmentation strategy which is similar to the one presented by~\citet{DBLP:conf/icml/VermaLKPL21} for generating similar potential malicious sessions which are similar to a seen malicious session $\mathbf{x}_i\in \mathcal{T}^1$. 

Let $\widehat{G}^1(\mathbf{x}_i)$ denote this set of generated similar potential malicious sessions. The set $\widehat{G}^1(\mathbf{x}_i)$ is defined  as $\widehat{G}^1(\mathbf{x}_i)=\{\widehat{\mathbf{z}}|\widehat{\mathbf{z}}=\lambda_1 \mathbf{z}_i+(1-\lambda_1)\mathbf{z}_j\}$. Here, $\widehat{\mathbf{z}}$ denotes the encoded representation of a generated similar potential malicious session, $\mathbf{x}_j\in B^1(\mathbf{x}_i)=\left\{\mathbf{x}_p\in A(\mathbf{x}_i)|y_p=1\right\}$ indicates samples $\mathbf{x}_p$  in $A(\mathbf{x}_i)$ with labels $y_p=1$, $\lambda_1$ is sampled from the Uniform distribution $U(\beta_1,1)$ where $\beta_1 \in [0,1]$, and $\beta_1$ is set closer to 1 to ensure that generated potential malicious sessions have close similarities with $\mathbf{x}_i$. 

\textit{Intuitions behind the design of $\widehat{G}^1(\mathbf{x}_i)$}. Since $\beta_1\in [0,1]$ and $\beta_1$ is closer to 1, and  $\lambda_1$ is sampled from the Uniform distribution $U(\beta_1,1)$, by using $\lambda_1 \mathbf{z}_i$, we get a potential malicious session which is similar to and in the same direction of $\mathbf{z}_i$.  Now, we want to generate potential malicious sessions which are not just confined to the same direction of $\mathbf{z}_i$ and are surrounding $\mathbf{z}_i$ in different directions. Thus, the operation $\widehat{\mathbf{z}}=\lambda_1 \mathbf{z}_i+(1-\lambda_1)\mathbf{z}_j$, aids in achieving this goal. Additionally, performing mix-up with malicious session $\mathbf{x}_j$ will aid in learning separable representations. 

\noindent{\textbf{Second objective}}. 
We generate \textit{diverse potential malicious sessions} which can diverge significantly from  a seen malicious session $\mathbf{x}_i\in\mathcal{T}^1$.  Let $\widetilde{G}^1(\mathbf{x}_i)$ denote this set of generated diverse potential malicious sessions. Our session augmentation strategy is inspired by the concept of \textit{affine sets}.  The set $\widetilde{G}^1(\mathbf{x}_i)$ is defined  as $\widetilde{G}^1(\mathbf{x}_i)=\{\widetilde{\mathbf{z}}|\widetilde{\mathbf{z}}=\lambda_2 \mathbf{z}_i+(1-\lambda_2)\mathbf{z}_j, fp(\widetilde{\mathbf{z}})=0\}$. Here, $\widetilde{\mathbf{z}}$ denotes the encoded representation of a generated diverse potential malicious session, $\lambda_2\sim U(-\beta_2,\beta_2)$, and $\beta_2\in \mathbb{R}$. We treat $\beta_2$ as a hyper-parameter in our empirical studies. We filter a false positive through the function $fp(\cdot)$ as:  

\begin{equation}
\label{eq:false_positive_session}
fp(\widetilde{\mathbf{z}})=   
\begin{cases}
1, \quad \text{if } ||\widetilde{\mathbf{z}}-\mathbf{v}_0||_2\leq r\\
0, \quad{otherwise}
\end{cases}
\end{equation}

\begin{figure*}
\centering
\boxed{\includegraphics[width=120mm,height=58mm]{./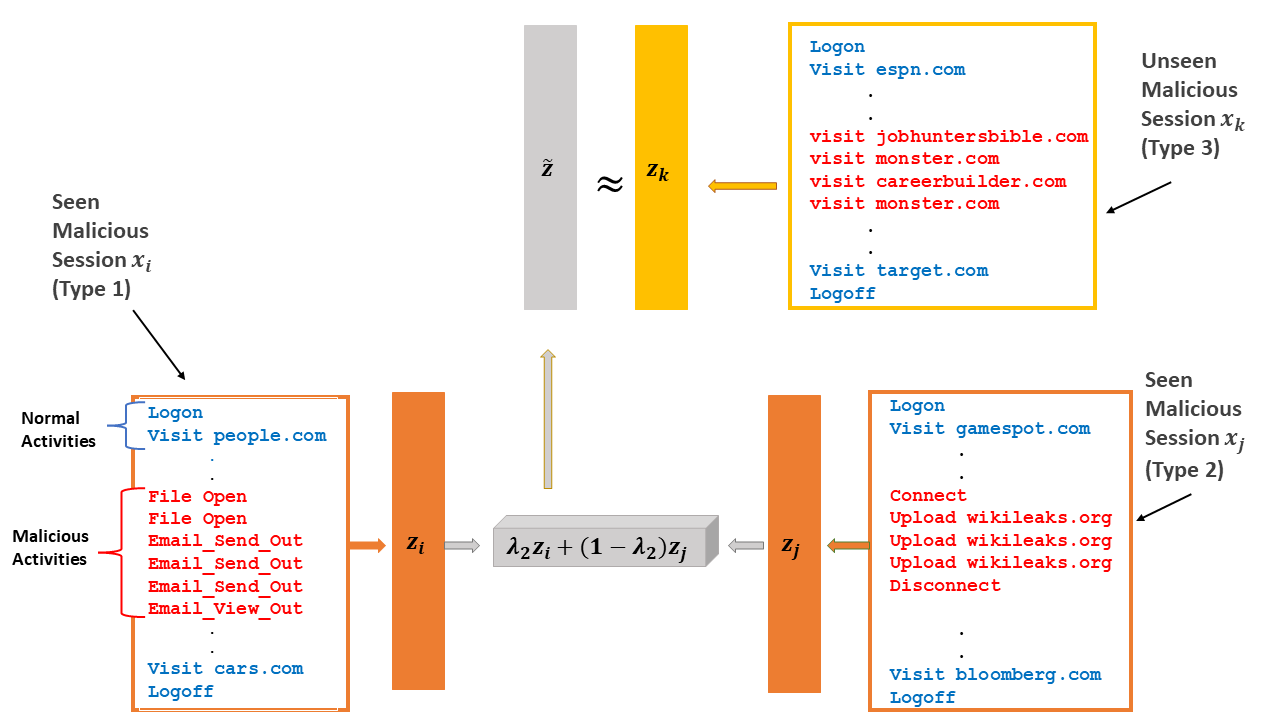}}
\caption{Illustration of generating a diverse potential malicious session (CERT dataset).}
\label{fig:diverse_potential_malicious_session}
\end{figure*}

\textit{Intuitions behind the design of $\widetilde{G}^1(\mathbf{x}_i)$}. We describe our intuitions by using the CERT insider threat dataset. In this dataset, there are five types of malicious sessions. For the ease of description, we consider only three types. Type-1 sessions are related to frauds involving stealing sensitive information and emailing it to a malicious agent. Type-2 sessions are related to devising frauds which involve misusing hardware devices such as removable drives. Type-3 sessions are related to visiting unethical websites such as job-portals with an intent to abandon the current organization. Consider two malicious sessions $\mathbf{x}_i$ and $\mathbf{x}_j$ which belong to type-1 and type-2, respectively. These sessions are illustrated in Figure \ref{fig:diverse_potential_malicious_session}.  Note that both $\mathbf{x}_i$ and $\mathbf{x}_j$, follow a similar activity sequence pattern wherein, normal activities are followed by malicious activities, and finally again followed by normal activities.  

For the malicious session $\mathbf{x}_i$, we can express $\mathbf{z}_i$ as  $\mathbf{z}_i=\mathbf{z}^0_i  \cup \mathbf{z}^1_i$ where $\mathbf{z}^0_i$ and $\mathbf{z}^1_i$ denote  encoded representation feature sets corresponding to normal and malicious activities, respectively. Let $\mathbf{z}^0_i\in \mathbb{R}^n$, $\mathbf{z}^1_i\in \mathbb{R}^m$, and $n+m=d$. Since sessions $\mathbf{x}_i$ and $\mathbf{x}_j$ follow a similar activity sequence pattern, we can hypothesize that $\mathbf{z}_j=\mathbf{z}^0_j  \cup \mathbf{z}^1_j$ with $\mathbf{z}^0_j\in \mathbb{R}^n$ and $\mathbf{z}^1_j\in \mathbb{R}^m$.  Due to the effect of first stage training, normal activity features are tightly clustered in the encoded representation space. Hence, we can infer that $\mathbf{z}^0_i\approx \mathbf{z}^0_j$. Consider an unseen malicious session $\mathbf{x}_k$ belonging to type-3 which is not present in $\mathcal{T}^1$. This session $\mathbf{x}_k$ is shown in Figure \ref{fig:diverse_potential_malicious_session}. Clearly, $\mathbf{x}_k$ also follows a similar activity sequence pattern as $\mathbf{x}_i$ and $\mathbf{x}_j$.  Assume that $\mathcal{T}^1$ does not contain any type-3 malicious sessions. Now we will describe how $\widetilde{G}^1(\mathbf{x}_i)$ can aid in approximating unseen malicious sessions such as $\mathbf{x}_k$. Since $\mathbf{x}_k$ follows a similar activity sequence pattern as $\mathbf{x}_i$ and $\mathbf{x}_j$, we can hypothesize that $\mathbf{z}_k=\mathbf{z}^0_k  \cup \mathbf{z}^1_k$ with $\mathbf{z}^0_k\in \mathbb{R}^n$ and $\mathbf{z}^1_k\in \mathbb{R}^m$. We can infer the result $\mathbf{z}^0_i\approx \mathbf{z}^0_j\approx \mathbf{z}^0_k$. Then for any $\lambda_2\in \mathbb{R}$, we have that: $\mathbf{z}^0_k\approx \lambda_2\mathbf{z}^0_i+ (1-\lambda_2)\mathbf{z}^0_j$. Note that $\mathbf{x}_i$, $\mathbf{x}_j$ and $\mathbf{x}_k$ belong to different malicious session types and hence, $\mathbf{z}^1_i$, $\mathbf{z}^1_j$, and $\mathbf{z}^1_k$ can diverge significantly from each other. By sampling a suitable value for $\lambda_2\sim U(-\beta_2,\beta_2)$, we employ $\lambda_2$ as a coefficient, and aim to obtain the result: $\mathbf{z}^1_k\approx \lambda_2\mathbf{z}^1_i+ (1-\lambda_2)\mathbf{z}^1_j$.

\noindent{\textbf{Second stage loss.}} We again leverage supervised contrastive loss to design our second stage loss function which is given by: 

\begin{align}
\label{eq:variant2_stage2_loss}
\mathcal{L}_2
=\frac{1}{R}\sum_{i=1}^R\Bigg[y_i\Bigg(\frac{1}{|D(\mathbf{x}_i)|}
\sum_{\mathbf{x}_p\in D(\mathbf{x}_i)}l\left(\mathbf{z}_i,\mathbf{z}_p,C(\mathbf{x}_i)\right)\Bigg)\Bigg]
\end{align}

Here, $C(\mathbf{x}_i)=A(\mathbf{x}_i)\cup \widehat{G}^1(\mathbf{x}_i)\cup\widetilde{G}^1(\mathbf{x}_i)$,  $D(\mathbf{x}_i)=B^1(\mathbf{x}_i)\cup\widehat{G}^1(\mathbf{x}_i)\cup\widetilde{G}^1(\mathbf{x}_i)$ and $l\left(\mathbf{z}_i,\mathbf{z}_p,C(\mathbf{x}_i)\right)$ denotes the individual loss between the pair $(\mathbf{x}_i,\mathbf{x}_p)$ corresponding to the malicious sessions defined as:

\begin{equation}
\label{eq:function_l_stage2}
l\left(\mathbf{z}_i,\mathbf{z}_p,C(\mathbf{x}_i)\right)=-\log\left(\frac{exp(\cos{\left(\mathbf{z}_i\cdot \mathbf{z}_p\right)}/\alpha)}{\sum_{\mathbf{x}_j\in C(\mathbf{x}_i)}exp(\cos{\left(\mathbf{z}_i\cdot \mathbf{z}_j\right)}/\alpha)}\right)
\end{equation}

\subsection{Training}

\begin{algorithm}
\caption{Training procedure for ConRo.} 
\begin{algorithmic}[1]
\Statex \textbf{Inputs}: $\mathcal{T}=\mathcal{T}^1\cup\mathcal{T}^0$, $R$, $M$, $\widehat{M}$,  $\widetilde{M}$, $\beta_1$, $\beta_2$ and our untrained encoder.
\Statex \textbf{Output}: well trained encoder.
\State generate raw representations of all the sessions in $\mathcal{T}$;
\Statex [\textbf{First Stage}]
\For{each training batch $S=\{\mathbf{x}_i\}_{i=1}^R$ generated from $\mathcal{T}$}
\State create the auxiliary batch $S^1=\{\mathbf{x}_i^1\}_{i=1}^M$ from $\mathcal{T}^1$;
\State calculate $R_0=\sum_{i=1}^R\mathbb{I}(y_i=0)$ and $\mathbf{v}_0=\frac{1}{R_0}\sum_{i=1}^{R}(1-y_i)\mathbf{z}_i$;
\For{each normal session $(\mathbf{x}_i,y_i=0)\in S$}
\State construct set $A(\mathbf{x}_i)=\left(S\cup S^1\right)-\{\mathbf{x}_i\}$;
\State construct set $B^{0}(\mathbf{x}_i)=\left\{\mathbf{x}_p\in A(\mathbf{x}_i)|y_p=0\right\}$;
\State calculate $l\left(\mathbf{z}_i,\mathbf{z}_p,A(\mathbf{x}_i)\right)$ for each session $\mathbf{x}_p\in B^{0}(\mathbf{x}_i)$ by using Equation \ref{eq:pair_supervised_contrastive_loss};
\EndFor
\State calculate  $\mathcal{L}^{Sup}$ by using Equation \ref{eq:variant2_stage1_supervised_contrastive_loss} and train the encoder;
\State calculate  $\mathcal{L}^{SV}$ by using Equation \ref{eq:variant2_stage1_deepsvdd_loss} and train the encoder;
\EndFor
\Statex [\textbf{Second Stage}]
\For{each training batch $S=\{\mathbf{x}_i\}_{i=1}^R$ generated from $\mathcal{T}$}
\State create the auxiliary batch $S^1=\{\mathbf{x}_i^1\}_{i=1}^M$ from $\mathcal{T}^1$;
\For{each malicious session $(\mathbf{x}_i,y_i=1)\in S$}
\State construct set $A(\mathbf{x}_i)=\left(S\cup S^1\right)-\{\mathbf{x}_i\}$;
\State construct set $B^{1}(\mathbf{x}_i)=\left\{\mathbf{x}_p\in A(\mathbf{x}_i)|y_p=1\right\}$;
\State $\widehat{G}^1(\mathbf{x}_i)=\phi$, $\widetilde{G}^1(\mathbf{x}_i)=\phi$;
\For{$k=1$ to $\widehat{M}$}
\State sample $\lambda_1$ from $U(\beta_1,1.0)$ and $\mathbf{x}_j$ from $B^{1}(\mathbf{x}_i)$;
\State $\widehat{G}^1(\mathbf{x}_i)=\widehat{G}^1(\mathbf{x}_i)\cup\left\{\lambda_1 \mathbf{z}_i+(1-\lambda_1)\mathbf{z}_j\right\}$;
\EndFor
\For{$k=1$ to $\widetilde{M}$}
\State sample $\lambda_2$ from $U(-\beta_2,\beta_2)$ and $\mathbf{x}_j$ from $B^{1}(\mathbf{x}_i)$;
\State $\widetilde{\mathbf{z}}=\lambda_2 \mathbf{z}_i+(1-\lambda_2)\mathbf{z}_j$ and calculate $fp(\widetilde{\mathbf{z}})$ by using Equation \ref{eq:false_positive_session};
\If{$fp(\widetilde{\mathbf{z}})$==0}
\State $\widetilde{G}^1(\mathbf{x}_i)=\widetilde{G}^1(\mathbf{x}_i)\cup \widetilde{\mathbf{z}}$;
\EndIf
\EndFor
\State construct set $C(\mathbf{x}_i)=A(\mathbf{x}_i)\cup\widehat{G}^1(\mathbf{x}_i)\cup\widetilde{G}^1(\mathbf{x}_i)$
\State construct set $D(\mathbf{x}_i)=B^1(\mathbf{x}_i)\cup\widehat{G}^1(\mathbf{x}_i)\cup\widetilde{G}^1(\mathbf{x}_i)$
\State calculate $l\left(\mathbf{z}_i,\mathbf{z}_p,C(\mathbf{x}_i)\right)$ for each session $\mathbf{x}_p \in D(\mathbf{x}_i)$ by using Equation \ref{eq:function_l_stage2};
\EndFor
\State calculate $\mathcal{L}_2$ using Equation \ref{eq:variant2_stage2_loss} and train the encoder;
\EndFor
\State \textbf{return} well trained encoder;
\end{algorithmic}
\label{alg:pseudocode_withoutu}
\end{algorithm}

The ConRo training procedure is outlined in Algorithm \ref{alg:pseudocode_withoutu}.  Let $\widehat{M}=|\widehat{G}^1(\mathbf{x}_i)|$ and $\widetilde{M}\geq|\widetilde{G}^1(\mathbf{x}_i)|$. Initially, we generate raw representations for all the sessions in the training set $\mathcal{T}$. In the first stage, we generate training batches from $\mathcal{T}$. For each training batch $S=\{\mathbf{x}_i\}_{i=1}^R$, we create the corresponding auxiliary batch $S^1=\{\mathbf{x}_i^1\}_{i=1}^M$ from $\mathcal{T}^1$. We calculate $R_0=\sum_{i=1}^R\mathbb{I}(y_i=0)$ and $\mathbf{v}_0=\frac{1}{R_0}\sum_{i=1}^{R}(1-y_i)\mathbf{z}_i$ where, $\mathbb{I}(\cdot)$ is an indicator function. Then, for each normal session $\mathbf{x}_i\in S$, we construct the sets $A(\mathbf{x}_i)=\left(S\cup S^1\right)-\{\mathbf{x}_i\}$ and $B^{0}(\mathbf{x}_i)=\left\{\mathbf{x}_p\in A(\mathbf{x}_i)|y_p=0\right\}$. For each session $\mathbf{x}_p \in  B^0(\mathbf{x}_i)$, we calculate $l\left(\mathbf{z}_i,\mathbf{z}_p,A(\mathbf{x}_i)\right)$ by using Equation \ref{eq:pair_supervised_contrastive_loss}. We calculate the batch supervised contrastive loss using  Equation \ref{eq:variant2_stage1_supervised_contrastive_loss} 
and train the encoder by using this batch loss. Then, we calculate the batch DeepSVDD loss using  Equation \ref{eq:variant2_stage1_deepsvdd_loss}
and again train the encoder by using this batch loss.

In the second stage, we again generate training batches from $\mathcal{T}$. For each training batch $S=\{\mathbf{x}_i\}_{i=1}^R$, we create the auxiliary batch $S^1=\{\mathbf{x}_i^1\}_{i=1}^M$ from $\mathcal{T}^1$. For each malicious session $\mathbf{x}_i\in S$, we construct sets $A(\mathbf{x}_i)=\left(S\cup S^1\right)-\{\mathbf{x}_i\}$ and $B^1(\mathbf{x}_i)=\{\mathbf{x}_p\in A(\mathbf{x}_i)|y_p=1\}$. We construct the set $\widehat{G}^1(\mathbf{x}_i)$ consisting of $\widehat{M}$ potential malicious sessions which are similar to $\mathbf{x}_i$. Specifically, we take a sample $\lambda_1$ from $U(\beta_1,1.0)$, we sample a session $\mathbf{x}_j$ from $B^{1}(\mathbf{x}_i)$, and generate a similar potential malicious session by applying the operation $\lambda_1 \mathbf{z}_i+(1-\lambda_1)\mathbf{z}_j$. We create another set $\widetilde{G}^1(\mathbf{x}_i)$ consisting of a maximum of $\widetilde{M}$ diverse potential malicious sessions. Specifically, we take a sample $\lambda_2$ from $U(-\beta_2,\beta_2)$, we sample a session $\mathbf{x}_j$ from $B^{1}(\mathbf{x}_i)$, and generate a diverse potential malicious session $\widetilde{\mathbf{z}}$ by applying the operation $\lambda_2 \mathbf{z}_i+(1-\lambda_2)\mathbf{z}_j$. Then, we calculate $\mathbf{v}_0$ and the radius of the normal session hyper-sphere $r$, and calculate  $fp(\widetilde{\mathbf{z}})$ using Equation \ref{eq:false_positive_session}. If $fp(\widetilde{\mathbf{z}})=0$ then, $\widetilde{\mathbf{z}}$ is considered as a true positive session and we include $\widetilde{\mathbf{z}}$ into $\widetilde{G}^1(\mathbf{x}_i)$. Then, we construct sets $C(\mathbf{x}_i)=A(\mathbf{x}_i)\cup\widehat{G}^1(\mathbf{x}_i)\cup\widetilde{G}^1(\mathbf{x}_i)$ and $D(\mathbf{x}_i)=B^1(\mathbf{x}_i)\cup\widehat{G}^1(\mathbf{x}_i)\cup\widetilde{G}^1(\mathbf{x}_i)$. 
For each session $\mathbf{x}_p \in  D(\mathbf{x}_i)$, we calculate $l\left(\mathbf{z}_i,\mathbf{z}_p,C(\mathbf{x}_i)\right)$ by using Equation \ref{eq:function_l_stage2}. We calculate the batch loss by using $\mathcal{L}_2$ shown in Equation \ref{eq:variant2_stage2_loss} and 
train the encoder by using this batch loss. Finally, the training algorithm returns the well trained encoder.

{\bf \noindent Time complexity analysis}. We analyze the time complexity of our ConRo training procedure by considering the forward pass and the number of times the individual loss $l(\cdot,\cdot,\cdot)$ is invoked in both stages. This time complexity is given by: 
$O\left(|\mathcal{T}^0|R+|\mathcal{T}^1|\left(M+|\widehat{G}^1(\mathbf{x}_i)|+|\widetilde{G}^1(\mathbf{x}_i)|\right)\right)$.

{\bf \noindent Inference.} After the second stage training, our encoder has learnt to push seen malicious sessions, similar and diverse potential malicious sessions closer in the encoded representation space, and as a consequence, all these sessions form a tight cluster in the encoded representation space (refer to Figure \ref{fig:conro_illustration_after_stage2_training}). Normal sessions are also tightly clustered in the encoded representation space due to the effect of first stage training. Hence, we design our inference strategy by analyzing the proximities of a test case session to the centers of normal and malicious sessions in the encoded representation space. Let $\mathbf{v}_1$ denote the estimated center of malicious sessions in the encoded representation space, which is given by $\mathbf{v}_1=\frac{1}{M}\sum_{i=1}^M\mathbf{z}_i^1$, where $\{\mathbf{x}_i^1\}_{i=1}^M$ denotes $M$ randomly sampled malicious sessions from $\mathcal{T}^1$. For any test case session $\mathbf{x}$, ConRo predicts its label as:

$label(\mathbf{x})=\begin{cases}
1 &\text{ if }\text{  } ||\mathbf{z}-\mathbf{v}_1||_2<||\mathbf{z}-\mathbf{v}_0||_2 \\
0 & otherwise
\end{cases}$

\subsection{Theoretical Analysis}

We present a theoretical analysis study to highlight the important factors which influence the generalization performance of our ConRo framework. We introduce a set definition called $\epsilon$-$span(\mathbf{x}_j)$ for a session $\mathbf{x}_j$ which is defined as:

\resizebox{.97\linewidth}{!}{
\begin{minipage}{\linewidth}
\begin{align}
\label{eq:span}
\epsilon\text{-}span(\mathbf{x}_j)=\left\{\mathbf{x}_k\bigg| ||\mathbf{z}_k-\mathbf{z}_j||_2\leq \epsilon, P\left(||\mathbf{v}_y-\mathbf{z}_k||_2<||\mathbf{v}_{1-y}-\mathbf{z}_k||_2\bigg|||\mathbf{v}_y-\mathbf{z}_j||_2<||\mathbf{v}_{1-y}-\mathbf{z}_j||_2\right) = 1 \right\} 
\end{align}
\end{minipage}
}

Here, $y=\{0,1\}$, $\epsilon$-$span(\mathbf{x}_j)$ contains those sessions $\mathbf{x}_k$ such that  $||\mathbf{z}_k-\mathbf{z}_j||_2\leq \epsilon$ for some $\epsilon\geq0$, and our encoder has similar session projection action w.r.t to proximities between $\mathbf{v}_0$ and $\mathbf{v}_1$ for both $\mathbf{x}_k$ and $\mathbf{x}_j$, which is as shown in the right hand side second term of Equation \ref{eq:span}. We extend the definition of $\epsilon$-$span(\mathbf{x}_j)$ to a set of sessions $\mathcal{S}$ called $\epsilon$-$span(\mathcal{S})$. This set definition is given by:

\resizebox{.97\linewidth}{!}{
\begin{minipage}{\linewidth}
\begin{align*}
\epsilon\text{-}span(\mathcal{S})=\left\{\mathbf{x}_k\bigg|\exists \mathbf{x}_j\in \mathcal{S}, ||\mathbf{z}_k-\mathbf{z}_j||_2\leq \epsilon, P\left(||\mathbf{v}_y-\mathbf{z}_k||_2<||\mathbf{v}_{1-y}-\mathbf{z}_k||_2\bigg|||\mathbf{v}_y-\mathbf{z}_j||_2<||\mathbf{v}_{1-y}-\mathbf{z}_j||_2\right) = 1 \right\} 
\end{align*}
\end{minipage}
}

Here, $\epsilon$-$span(\mathcal{S})$ contains those sessions $\mathbf{x}_k$ such that there exists some session $\mathbf{x}_j\in \mathcal{S}$ where $||\mathbf{z}_k-\mathbf{z}_j||_2\leq \epsilon$ for some $\epsilon\geq0$, and our encoder has similar session projection action w.r.t to proximities between $\mathbf{v}_0$ and $\mathbf{v}_1$ for both $\mathbf{x}_k$ and $\mathbf{x}_j$. 
For our theoretical analysis, we assume an existence of a hypothetical oracle version of our encoder which is trained by using the labeled sessions sampled from the test distribution $\mathcal{D}$ and supervised contrastive loss function. Let $\mathbf{v}^o_0$ and $\mathbf{v}_1^o$ denote the centers of normal and malicious sessions in the encoded representation space corresponding to this oracle encoder, respectively. For a test case session $\mathbf{x}$, the oracle encoder has the following properties:

$P\left(||\mathbf{v}_1^o-\mathbf{z}||_2<||\mathbf{v}_{0}^o-\mathbf{z}||_2\right)=    
\begin{cases}
1, \quad \text{if } y=1\\
0, \quad{otherwise}
\end{cases}$

\begin{thm}
\label{theorem_2}
For any test case session $\mathbf{x}$ which is sampled from $\mathcal{D}$, the following bound holds:   
\begin{align*}
P\left(||\mathbf{v}_y-\mathbf{z}||_2<||\mathbf{v}_{1-y}-\mathbf{z}||_2\bigg|||\mathbf{v}_y^o-\mathbf{z}||_2<||\mathbf{v}_{1-y}^o-\mathbf{z}||_2\right)
\geq P\left(\mathbf{x}\in \bigcup_{\mathbf{x}_i\in\mathcal{T}^1}\epsilon\text{-}span\left(\widehat{G}^1(\mathbf{x}_i)\cup\mathbf{x}_i\right)\cup\epsilon\text{-}span\left(\widetilde{G}^1(\mathbf{x}_i)\right) \right)
\end{align*}
\end{thm}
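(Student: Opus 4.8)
The plan is to read both sides of the inequality as probabilities of events over the draw of the test session $\mathbf{x}\sim\mathcal{D}$, and then to reduce the claim to a single set-theoretic inclusion. Write $A$ for the event $\{||\mathbf{v}_y-\mathbf{z}||_2<||\mathbf{v}_{1-y}-\mathbf{z}||_2\}$ (ConRo places $\mathbf{z}$ on the correct side), $B$ for the event $\{||\mathbf{v}_y^o-\mathbf{z}||_2<||\mathbf{v}_{1-y}^o-\mathbf{z}||_2\}$ (the oracle is correct), and $E$ for the span event $\{\mathbf{x}\in \bigcup_{\mathbf{x}_i\in\mathcal{T}^1}\epsilon\text{-}span(\widehat{G}^1(\mathbf{x}_i)\cup\mathbf{x}_i)\cup\epsilon\text{-}span(\widetilde{G}^1(\mathbf{x}_i))\}$. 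The target is then $P(A\mid B)\ge P(E)$.

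First I would dispose of the conditioning. By the stated defining property of the oracle encoder, a test session of true label $y$ is placed strictly closer to $\mathbf{v}_y^o$ than to $\mathbf{v}_{1-y}^o$ with probability one; hence $P(B)=1$, so $P(B^c)=0$ and $P(A\cap B^c)=0$, giving $P(A\mid B)=P(A\cap B)/P(B)=P(A)$. This collapses the theorem to the unconditional bound $P(A)\ge P(E)$, which in turn follows from monotonicity of probability once I establish the inclusion $E\subseteq A$.

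The heart of the argument is that inclusion. Suppose $\mathbf{x}\in E$; then there is a witness session $\mathbf{x}_j$ lying in $\widehat{G}^1(\mathbf{x}_i)\cup\{\mathbf{x}_i\}$ or in $\widetilde{G}^1(\mathbf{x}_i)$ for some $\mathbf{x}_i\in\mathcal{T}^1$, with $\mathbf{x}\in\epsilon\text{-}span(\mathbf{x}_j)$. I would first show that ConRo classifies the witness correctly, namely $||\mathbf{v}_1-\mathbf{z}_j||_2<||\mathbf{v}_0-\mathbf{z}_j||_2$: the seen sessions $\mathbf{x}_i$ and the similar augmentations in $\widehat{G}^1(\mathbf{x}_i)$ are near-convex combinations of malicious representations and so remain in the malicious cluster, while every diverse augmentation retained in $\widetilde{G}^1(\mathbf{x}_i)$ has passed the filter $fp(\cdot)=0$ of Equation~\ref{eq:false_positive_session} and is pulled toward the malicious center by $\mathcal{L}_2$ (Figure~\ref{fig:conro_illustration_after_stage2_training}). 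With the witness on its correct (malicious) side, I would then invoke the probability-one consistency clause built into the definition of $\epsilon\text{-}span(\mathbf{x}_j)$ — the conditional-probability term equal to $1$ — which transfers correct placement from $\mathbf{z}_j$ to $\mathbf{z}$, so that $\mathbf{x}$ is likewise placed on the side of $\mathbf{v}_1$ and hence $\mathbf{x}\in A$. This gives $E\subseteq A$, and combining with the reduction above yields $P(A\mid B)=P(A)\ge P(E)$.

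The main obstacle is the witness-correctness step, particularly for the diverse augmentations $\widetilde{G}^1(\mathbf{x}_i)$. The filter $fp$ only discards augmentations inside the normal hyper-sphere, and lying outside that sphere does not by itself certify $||\mathbf{v}_1-\mathbf{z}_j||_2<||\mathbf{v}_0-\mathbf{z}_j||_2$; this must be secured by appealing to the clustering effect of the minimized second-stage loss $\mathcal{L}_2$, which tightens all malicious and retained generated representations around $\mathbf{v}_1$. I would make this rigorous either by positing that $\mathcal{L}_2$ is driven to its optimum (an idealized-encoder assumption), or by bounding the residual distance of each retained $\mathbf{z}_j$ to $\mathbf{v}_1$ and relating that slack to $\epsilon$ so that the span's $\epsilon$-ball cannot cross the decision surface. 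The remaining pieces — the $P(B)=1$ reduction and the transfer through the consistency clause — are then routine and require no further estimates.
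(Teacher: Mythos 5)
Your reduction of the conditioning (the oracle event has probability one, so $P(A\mid B)=P(A)$) is fine and is in the same spirit as the paper, and your flagging of the witness-correctness obstacle for $\widetilde{G}^1(\mathbf{x}_i)$ — the filter $fp(\cdot)$ only certifies being outside the normal hyper-sphere, not proximity to $\mathbf{v}_1$ — is a legitimate concern that the paper itself silently glosses over. The genuine gap is the single set inclusion $E\subseteq A$ that you make the heart of the argument: it cannot hold for normal test sessions, and your own transfer step shows why. The event $A$ is label-dependent: for a session with true label $y=0$, $A$ is the event that $\mathbf{z}$ lands closer to $\mathbf{v}_0$. But $E$ is membership in $\epsilon$-spans of seen or generated \emph{malicious} witnesses, and your argument establishes that any session in $E$ is pulled to the $\mathbf{v}_1$ side by the probability-one consistency clause of the span definition. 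For a normal test session $\mathbf{x}\in E$ this yields $\mathbf{x}\in A^{c}$, not $\mathbf{x}\in A$; that is, your argument actually proves $E\cap\{y=0\}\subseteq A^{c}$, so the inclusion $E\subseteq A$ is false whenever a normal session lies in one of the spans (a set the paper's own experiments suggest is non-negligible, e.g.\ the false positives on UMD-Wikipedia). Hence $P(A)\ge P(E)$ does not follow from monotonicity alone.

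The paper avoids this by splitting on the label rather than treating both labels through one inclusion. For $y=1$ it runs essentially your inclusion-and-transfer argument (with the same implicit assumption that the witnesses themselves are classified as malicious). For $y=0$ it invokes a separate and different claim: because the training set contains a large number of normal sessions that cover their diversity, the encoder classifies normal test sessions correctly with probability $\approx 1$, so the left-hand side is $\approx 1$ and dominates the right-hand side trivially, with no appeal to $E$ at all. To repair your proof you would need to restrict the set-theoretic argument to the $y=1$ case and supply this (or some other) separate argument for the normal case; the uniform route is unsound as written.
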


\begin{proof}
\textit{Case when} $y=1$. In this case: $P\left(||\mathbf{v}_1^o-\mathbf{z}||_2<||\mathbf{v}_{0}^o-\mathbf{z}||_2\right)=1$. Now our encoder can either project $\mathbf{x}$ closer to $\mathbf{v}_1$ or $\mathbf{v}_0$. If it projects closer to $\mathbf{v}_1$ which means that $||\mathbf{v}_1-\mathbf{z}||_2<||\mathbf{v}_{0}-\mathbf{z}||_2$. Then, there are two scenarios. In the first scenario, we have that:

\begin{align*}
\mathbf{x}\in \bigcup_{\mathbf{x}_i\in\mathcal{T}^1}\epsilon\text{-}span\left(\widehat{G}^1(\mathbf{x}_i)\cup\mathbf{x}_i\right)\cup\epsilon\text{-}span\left(\widetilde{G}^1(\mathbf{x}_i)\right)
\end{align*}

For a set of sessions $\mathcal{S}$, we have defined $\epsilon\text{-}span\left(\mathcal{S}\right)$ as:

\begin{align*}
\epsilon\text{-}span(\mathcal{S})=\left\{\mathbf{x}_k\bigg|\exists \mathbf{x}_j\in \mathcal{S}, ||\mathbf{z}_k-\mathbf{z}_j||_2\leq \epsilon, P\left(||\mathbf{v}_y-\mathbf{z}_k||_2<||\mathbf{v}_{1-y}-\mathbf{z}_k||_2\bigg|||\mathbf{v}_y-\mathbf{z}_j||_2<||\mathbf{v}_{1-y}-\mathbf{z}_j||_2\right) = 1 \right\} 
\end{align*}

By only considering the first scenario and by the definition of $\epsilon\text{-}span\left(\mathcal{S}\right)$, we have the result:

\begin{align*}
P\left(||\mathbf{v}_1-\mathbf{z}||_2<||\mathbf{v}_{0}-\mathbf{z}||_2\bigg|||\mathbf{v}_1^o-\mathbf{z}||_2<||\mathbf{v}_{0}^o-\mathbf{z}||_2\right)
= P\left(\mathbf{x}\in \bigcup_{\mathbf{x}_i\in\mathcal{T}^1}\epsilon\text{-}span\left(\widehat{G}^1(\mathbf{x}_i)\cup\mathbf{x}_i\right)\cup\epsilon\text{-}span\left(\widetilde{G}^1(\mathbf{x}_i)\right) \right)
\end{align*}

In the alternate scenario, we have that:

\begin{align*}
\mathbf{x}\notin \bigcup_{\mathbf{x}_i\in\mathcal{T}^1}\epsilon\text{-}span\left(\widehat{G}^1(\mathbf{x}_i)\cup\mathbf{x}_i\right)\cup\epsilon\text{-}span\left(\widetilde{G}^1(\mathbf{x}_i)\right)
\end{align*}

By considering both these scenarios, we have the result: 

\begin{align*}
P\left(||\mathbf{v}_1-\mathbf{z}||_2<||\mathbf{v}_{0}-\mathbf{z}||_2\bigg|||\mathbf{v}_1^o-\mathbf{z}||_2<||\mathbf{v}_{0}^o-\mathbf{z}||_2\right)
\geq P\left(\mathbf{x}\in \bigcup_{\mathbf{x}_i\in\mathcal{T}^1}\epsilon\text{-}span\left(\widehat{G}^1(\mathbf{x}_i)\cup\mathbf{x}_i\right)\cup\epsilon\text{-}span\left(\widetilde{G}^1(\mathbf{x}_i)\right) \right)
\end{align*}

\textit{Case when} $y=0$. In this case, $P\left(||\mathbf{v}_0^o-\mathbf{z}||_2\leq||\mathbf{v}_{1}^o-\mathbf{z}||_2\right)=1$. Since  our encoder is trained by using a large number of normal sessions in $\mathcal{T}$ and due to which, we can sufficiently cover the diversity in normal sessions, we have that: 
$P\left(||\mathbf{v}_0-\mathbf{z}||_2<||\mathbf{v}_{1}-\mathbf{z}||_2\bigg|||\mathbf{v}_0^o-\mathbf{z}||_2<||\mathbf{v}_{1}^o-\mathbf{z}||_2\right)\approx1$. Thus, the theorem immediately follows. 

\end{proof}

Theorem \ref{theorem_2} outlines a formal explanation on the generalization performance of ConRo. Clearly, this performance is influenced by  $P\left(\mathbf{x}\in \bigcup_{\mathbf{x}_i\in\mathcal{T}^1}\epsilon\text{-}span\left(\widehat{G}^1(\mathbf{x}_i)\cup\mathbf{x}_i\right)\cup\epsilon\text{-}span\left(\widetilde{G}^1(\mathbf{x}_i)\right) \right)$. This probability value can be increased by: (1) setting a suitable value for $\beta_2$ based on empirical analysis, and (2) setting a large value for $|\widetilde{G}^1(\mathbf{x}_i)|$ ensures that generated diverse potential malicious sessions can span diverse regions of the encoded representation space.

\section{Experiments}

We describe our experimental setup\footnote{Code is available through the hyperlink: \color{blue}\href{https://www.dropbox.com/s/0bnt28jvpj0v2he/ConRo_Code.zip?raw=1}{Code Link}.} including datasets and
baselines used in this paper and then discuss our experimental
results including hyper-parameter sensitivity, visualization,  training latency, and ablation analysis results. 

\subsection{Experimental Setup}

\subsubsection{Datasets}
\label{sec:training_set}
We use three benchmark fraud detection datasets: CERT  \citep{DBLP:conf/sp/GlasserL13}, UMD-Wikipedia \citep{10.1145/2783258.2783367}, and OpenStack~\citep{DBLP:conf/ccs/Du0ZS17}.

{\bf \noindent CERT \citep{DBLP:conf/sp/GlasserL13}.} The CERT dataset is a comprehensive dataset for insider threat detection. There are 48 malicious and 1,581,358  normal sessions. 
The insider sessions are chronologically recorded over 516 days. To avoid extreme training latency, we randomly sample 10,000 normal sessions from the first 460 days, and include them in the training set $\mathcal{T}$. Similarly, we randomly sample 500 normal sessions from 461 to 516 days to construct our test set. 
There are 5 types of malicious sessions. (1) \textit{Logon}: The insider logs on a computer during weekends or on a weekday after work hours. (2) \textit{Email}: The insider sends/views unexpected emails to/from external sources. (3) \textit{HTTP}: The insider uploads/downloads organizational information to/from external malicious websites. (4) \textit{Device}: The insider connects a device such as removable drives during weekends or on a weekday after work hours. (5) \textit{File}: The insider manipulates organizational files with malicious intentions. We construct a biased training set corresponding to malicious sessions wherein, we include device, email, and file malicious session types in the training set and remaining two types in the test set. Specifically, we include 30 and 18 malicious sessions in the training and test sets, respectively.

{\bf \noindent UMD-Wikipedia \citep{10.1145/2783258.2783367}.} This dataset consists of activity sessions of a set of users who have edited the Wikipedia website.  In this dataset, there are 5486 normal and 4627 malicious sessions. We randomly sample 1000 normal sessions to construct the test set and include all the remaining 4486 normal sessions in the training set.  For the malicious sessions, in-order to simulate open-set and imbalanced dataset scenario, we construct the training set  by leveraging and suitably adapting the procedure utilized by~\citet{10.1145/3459637.3482104}, which is described below. We calculate the appropriate number of malicious session clusters $(K)$ in the available malicious sessions by using \textit{silhouette coefficient analysis} \citep{ROUSSEEUW198753}. From our empirical study, we get $K=3$. Then, we randomly sample 70 and 10 malicious sessions from the first and second malicious session clusters, respectively, and include them in the training set. From the remaining malicious sessions, we randomly sample similar number of malicious sessions from each of the 3 clusters to construct the test set which contains 500 malicious sessions. 

{\bf \noindent OpenStack~\citep{DBLP:conf/ccs/Du0ZS17}.}  This dataset records the activity sessions of  users who have used the OpenStack cloud services. In this dataset, there are 244,908 normal and 18,434 malicious sessions. We randomly sample 10,000 and 1000 normal sessions and include them in our training and test sets, respectively. For the malicious sessions, through silhouette coefficient analysis we get $k=12$ malicious session clusters. We randomly sample 50 and 10 malicious sessions from the first and second malicious session clusters, respectively, and include them in the training set. From the remaining malicious sessions, we construct a test set having 120 malicious sessions by randomly sampling equal number of malicious sessions from each of the 12 malicious session clusters.

\subsubsection{Training Details}

By considering a user activity session as a sentence, we train the word-to-vector model \citep{mikolov2013efficient} to derive the activity representation. The minimum activity frequency is set as 1 since every activity is given importance in the design.
To effectively train our session encoder, we set the number of dimensions of the activity and session representations as $d=50$. Since we generate encoded session representation by averaging the output sequence of the LSTM model, we set the hidden layer size of LSTM to 50. The temperature parameter $\alpha$ shown in Equations \ref{eq:pair_supervised_contrastive_loss} and \ref{eq:function_l_stage2} is set to its default value 1. We opt for medium sized training batches in order to avoid extreme memory requirements during encoder training. Specifically, we use 100 sessions $(R)$ in each training batch. We set the size of the malicious session auxiliary batch $(M)$ as 20. The sizes of potential malicious session batches $|\widehat{G}^1(\mathbf{x}_i)|$ and $|\widetilde{G}^1(\mathbf{x}_i)|$ are set as 20 and 200, respectively. For $\beta_1$, we set its value as 0.92 because it is supposed to be closer to 1.  For $\beta_2$, we set its value as 4 in order to generate potential malicious sessions which are sufficiently diverse. Additionally, we perform a sensitivity analysis study on $\beta_2$ which is described in Section \ref{sec:sensitivity_analysis_study}. We use the Adam optimizer~\citep{DBLP:journals/corr/KingmaB14} with a learning rate of 0.005 and we use 10 training epochs for both stages.  We utilize three metrics to measure the anomaly detection performance: $F_1$, False Positive Rate (FPR), and Area Under the Receiver Operating Characteristics Curve (AUC-ROC). We report the mean and standard deviation of performance scores after 5 times of running.

\subsubsection{Baselines}

We compare our ConRo framework with five state-of-the-art baselines which were  specifically designed for anomaly detection: DeepSVDD~\citep{pmlr-v80-ruff18a}, DeepSAD~\citep{DBLP:conf/iclr/RuffVGBMMK20}, DevNet~\citep{10.1145/3292500.3330871,DBLP:journals/corr/abs-2108-00462}, CLDet~\citep{DBLP:conf/dasfaa/VinayYW22}, and Swan~\citep{9879727}. All these baselines operate in the setting where only a few anomalous samples are available.  DeepSVDD, DeepSAD, DevNet, and CLDet have been designed for closed set anomaly detection whereas, Swan has been designed for open-set anomaly detection. Specifically, CLDet is a self-supervised contrastive learning based insider threat detection framework. Except CLDet, the remaining baselines originally operate on image datasets, and employ neural  networks for image data such as CNN~\citep{DBLP:conf/iclr/RuffVGBMMK20}, ResNet-18~\citep{9879727} etc. Hence, they cannot be directly applied for our fraud detection task which operates on sequential data. We replace their neural networks with our LSTM-based session encoder and adapt these baselines to our fraud detection task. We employ the same training set used for our ConRo to train all these baselines. For Swan, the original augmentation technique is image-specific. Hence, we replace it with  the augmentation technique proposed by~\citet{DBLP:conf/icml/VermaLKPL21}. Additionally, the unseen malicious sessions are detected in a residual representation space which is defined as: $\mathbf{v}_0-\mathbf{z}$.

\subsection{Experimental Results}

\subsubsection{Overall Comparison}

\begin{table*}[htbp]
\caption{Performances of our ConRo and baselines (mean\textpm std). The higher the better for F1 and AUC-ROC. The lower the better for FPR.  The best values are bold highlighted. 
}
\label{tb:expr_results}
\resizebox{1.0\textwidth}{!}{
\begin{tabular}{|c|c|c|c|c|c|c|c|c|c|}
\hline
\multirow{2}{*}{\textbf{Models}} & \multicolumn{3}{|c|}{\textbf{CERT}} & \multicolumn{3}{|c|}{\textbf{UMD-Wikipedia}} & \multicolumn{3}{|c|}{\textbf{Open-Stack}}\\ \cline{2-10}
 &  F1 & FPR & AUC-ROC & F1 & FPR & AUC-ROC & F1 & FPR & AUC-ROC \\\hline
 DeepSVDD &  14.67\textpm 4.1 & 14.30\textpm 1.8 &  62.29\textpm 4.8  & 33.23\textpm 1.7 & 44.90 \textpm 1.4  & 46.47\textpm 0.8 & 32.27\textpm 0.7 &  42.10\textpm 1.4 & 79.02\textpm 0.7 \\\hline
 DeepSAD &   24.71\textpm 7.5 & 20.53\textpm 7.1 & 84.17\textpm 3.6  & 56.88\textpm 2.9 & 13.30\textpm 0.2 & 68.35\textpm 1.7  & 67.43\textpm 3.1 &  9.70\textpm 1.3 & 94.12\textpm 0.1 \\\hline
CLDet &  60.41\textpm 3.6  &  3.75\textpm 1.9  &  79.47\textpm 2.6 &  58.78\textpm 3.1 &  9.59\textpm 2.8 &  70.71\textpm 2.4  &  61.71\textpm 2.9   &  6.18\textpm 2.1   &  83.98\textpm 1.8 \\\hline
 Swan &   59.31\textpm 2.2 & \textbf{0.0\textpm 0.0} & 72.12\textpm 0.1 & 57.02\textpm 0.9 & \textbf{0.0\textpm 0.0} & 69.89\textpm 0.5  & 62.93\textpm 4.2  &  \textbf{0.0\textpm 0.0}  & 73.10\textpm 2.3 \\\hline
ConRo &   \textbf{68.33\textpm 3.9} & 2.20\textpm 0.5  &  \textbf{90.50\textpm 0.3}  & \textbf{71.40\textpm 2.3} & 31.50\textpm 2.1 & \textbf{79.50\textpm 2.1} &   \textbf{77.56\textpm 2.3}  & 5.80\textpm 0.8  &  \textbf{97.10\textpm 0.4}  \\\hline
\end{tabular}
}
\end{table*}

The performance of our ConRo framework and baselines for all datasets are shown in Table \ref{tb:expr_results}. Clearly, our ConRo outperforms all baselines\footnote{Since DevNet classifies all test sessions as normal, we have not shown its performance scores. Devnet  does not employ any augmented malicious sessions for its training, so it cannot effectively address the dataset imbalance challenge.} w.r.t most of the performance metrics. These baselines do not learn effective class-specific shared features in the encoded representation space. Thus, due to the combined challenges of session diversity, dataset imbalance, and biased malicious training samples, they fail to provide noticeable results. However, ConRo addresses all these challenges effectively. It addresses the session diversity challenge through supervised contrastive learning. It addresses the dataset imbalance challenge by generating a large amount of augmented/potential malicious sessions. Finally, it addresses the challenge of biased malicious training samples by generating diverse potential malicious sessions. 

For the UMD-Wikipedia dataset, Swan noticeably outperforms our ConRo w.r.t FPR score. The mechanisms of ConRo and Swan are different. Specifically, Swan learns to identify unseen malicious sessions in a residual representation space $(\mathbf{v}_0-\mathbf{z})$ whereas, ConRo learns to identify unseen malicious sessions by generating a large amount of diverse potential malicious sessions. In UMD-Wikipedia dataset, many normal sessions share close similarities with unseen malicious sessions. Therefore, ConRo identifies some of the test normal sessions sharing close similarities with test malicious sessions as malicious (false positive) which negatively impacts the FPR scores of ConRo.  However, Swan does not specifically address the session diversity challenge in malicious sessions due to which, Swan under-performs against ConRo w.r.t F1 and AUCROC scores.

\subsubsection{Sensitivity Analysis}
\label{sec:sensitivity_analysis_study}

\begin{figure}
\begin{minipage}[t]{0.485\columnwidth}
  \includegraphics[width=0.75\linewidth]{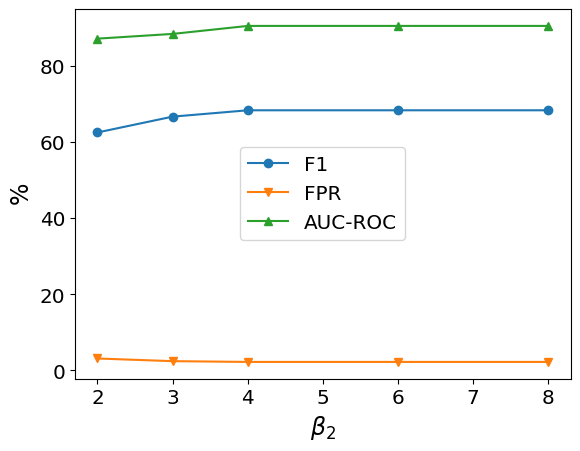}
\end{minipage}\hfill 
\begin{minipage}[t]{0.485\columnwidth}
  \includegraphics[width=0.75\linewidth]{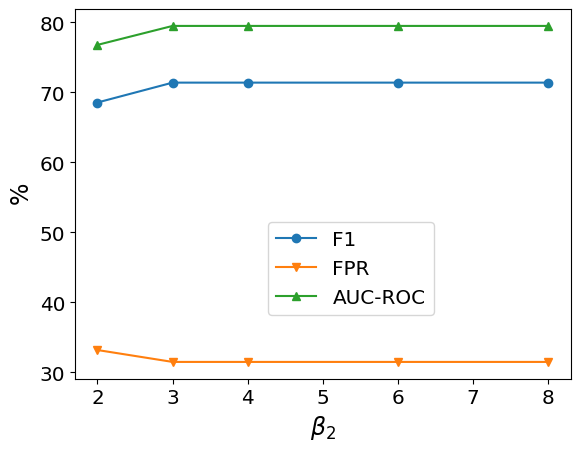}
\end{minipage}
\caption{Sensitivity analysis results w.r.t $\beta_2$ on CERT (left column) and UMD-Wikipedia (right column).}
\label{fig:sensitivty_analysis_result}
\end{figure}

For analyzing the sensitivity\footnote{We don't perform sensitivity analysis on $\beta_1$ because it is constrained to be set closer to 1~\citep{DBLP:conf/icml/VermaLKPL21}.} of the hyper-parameter $\beta_2$,  we first  perform first stage training of our encoder. Then by employing this first stage trained encoder, we further perform second stage training of our encoder separately corresponding to different $\beta_2$ values. Our sensitivity analysis results are shown in Figure \ref{fig:sensitivty_analysis_result}. Clearly, ConRo is not highly sensitive to hyper-parameter $\beta_2$. It only suffers slightly when $\beta_2$ is low and performance values converge for higher values. For example in the CERT dataset, for $\beta_2\geq 4$, the performance values converge. $\beta_2$ controls the diversity aspect of potential malicious sessions. The test set malicious sessions are not extremely diverse from their training set counterparts. Hence, generating extremely diverse potential malicious sessions does not aid in improving the generalization performance on the test set. In certain datasets, where the test case malicious sessions are extremely diverse when compared to their training set counterparts, we expect that higher values of $\beta_2$ can aid in improving the generalization performance.

\subsubsection{Visualization Analysis}

\begin{figure*}[htbp]
\centering
\begin{subfigure}{.30\textwidth}
  \centering
  \includegraphics[width=0.9\linewidth]{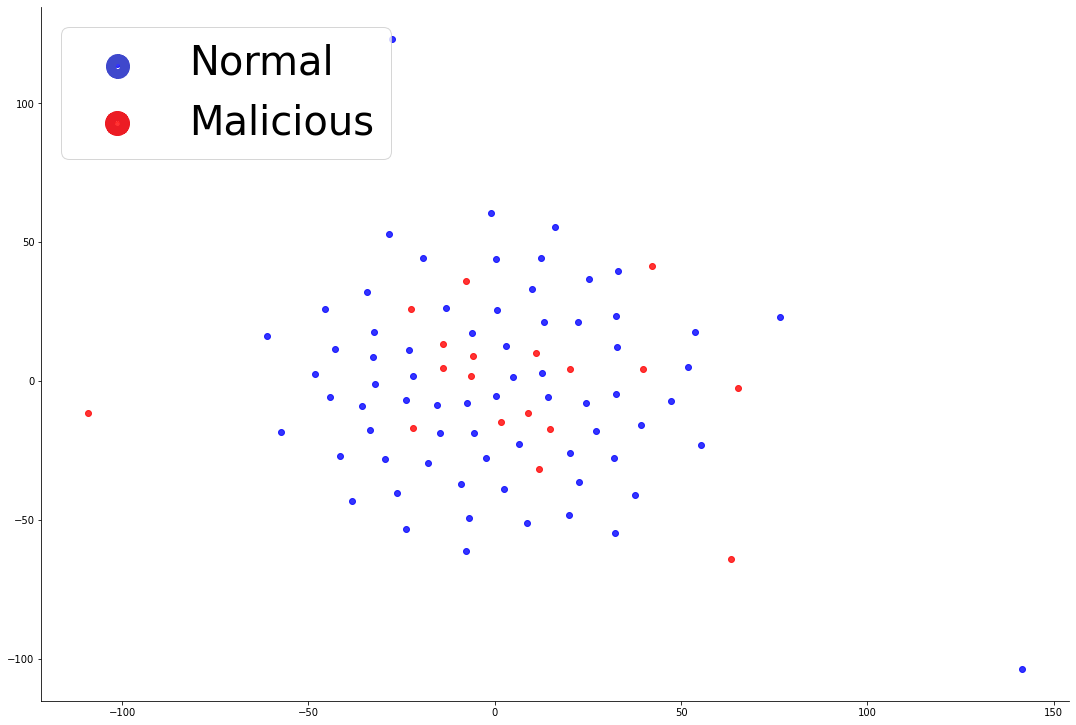}
  \caption{Raw sessions}
  \label{fig:raw_cert}
\end{subfigure}%
\label{fig:visualization_umd}
\begin{subfigure}{.30\textwidth}
  \centering
  \includegraphics[width=0.9\linewidth]{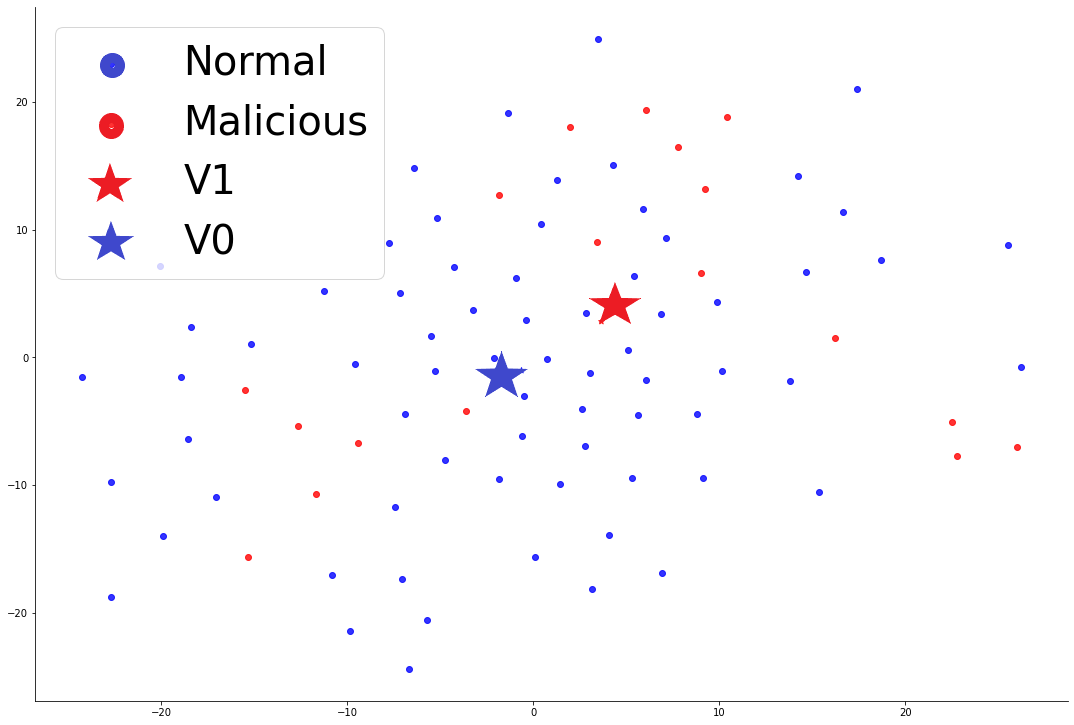}
  \caption{$\mathcal{L}_1$}
  \label{fig:stage1_cert}
\end{subfigure}
\begin{subfigure}{.30\textwidth}
  \centering
  \includegraphics[width=0.9\linewidth]{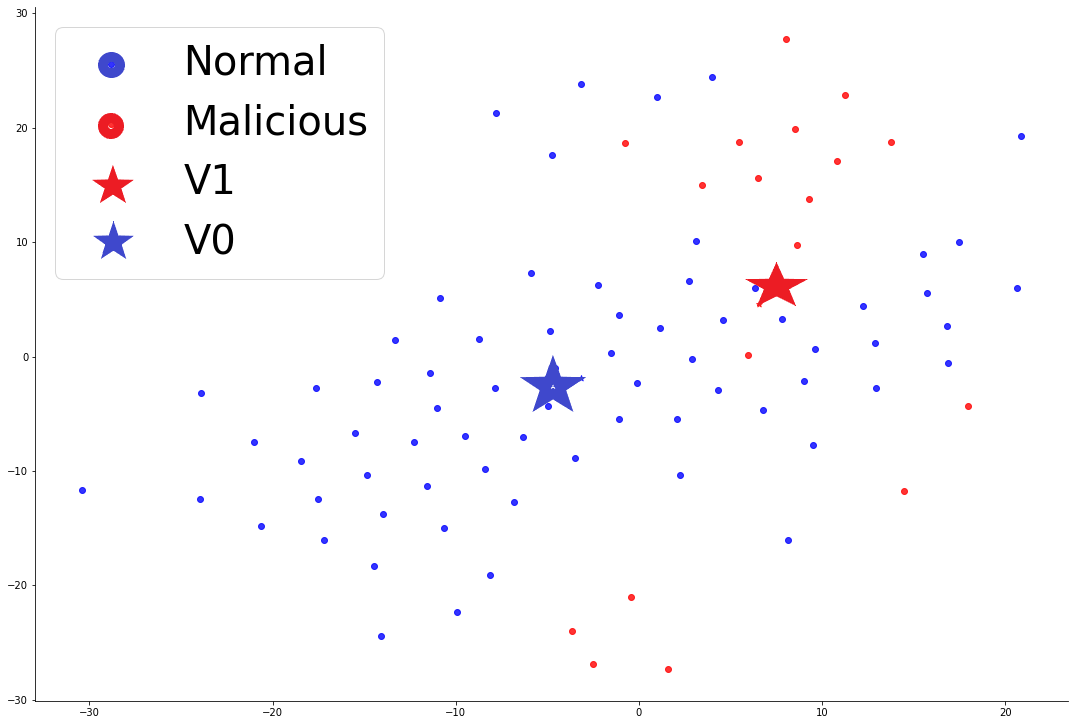}
  \caption{$\mathcal{L}_2$}
  \label{fig:stage2_cert}
\end{subfigure}
\caption{Visualization of test session representations for CERT dataset. Blue and red dots denote normal and malicious sessions, respectively. Blue and red stars denote $\mathbf{v}_0$ and $\mathbf{v}_1$, respectively.}
\label{fig:visualization_cert}
\end{figure*}

\begin{figure*}[htbp]
\centering
\begin{subfigure}{.32\textwidth}
  \centering
  \includegraphics[width=0.9\linewidth]{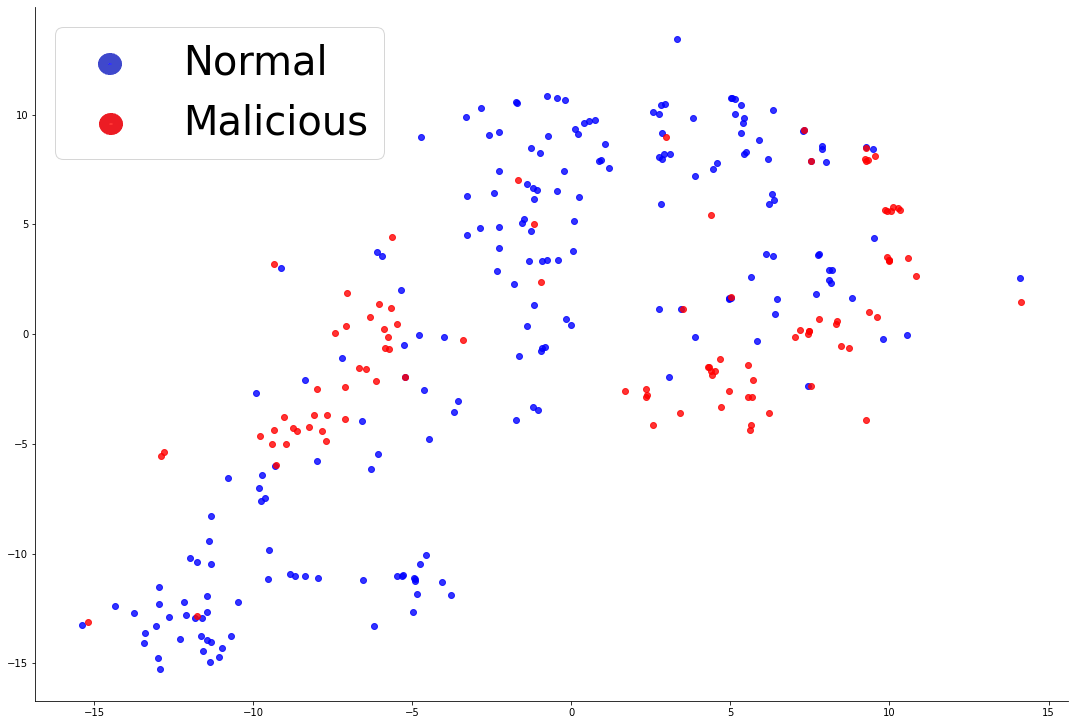}
  \caption{Raw sessions}
  \label{fig:raw_umd}
\end{subfigure}%
\label{fig:visualization_umd}
\begin{subfigure}{.32\textwidth}
  \centering
  \includegraphics[width=0.9\linewidth]{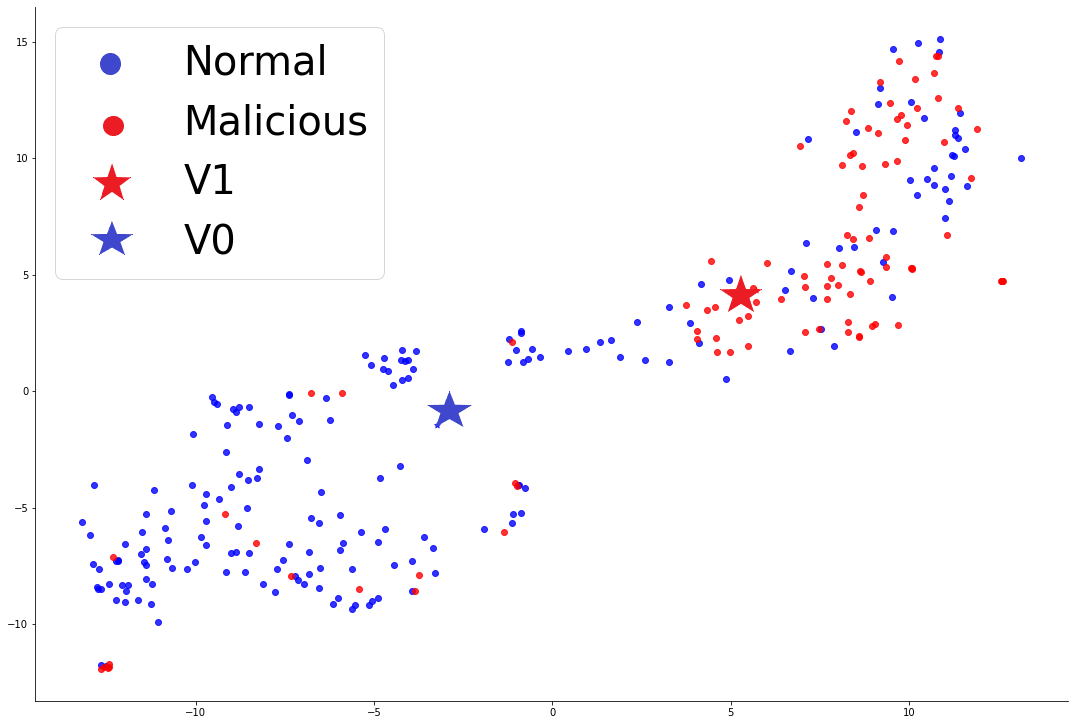}
  \caption{$\mathcal{L}_1$}
  \label{fig:stage1_umd}
\end{subfigure}
\begin{subfigure}{.32\textwidth}
  \centering
  \includegraphics[width=0.9\linewidth]{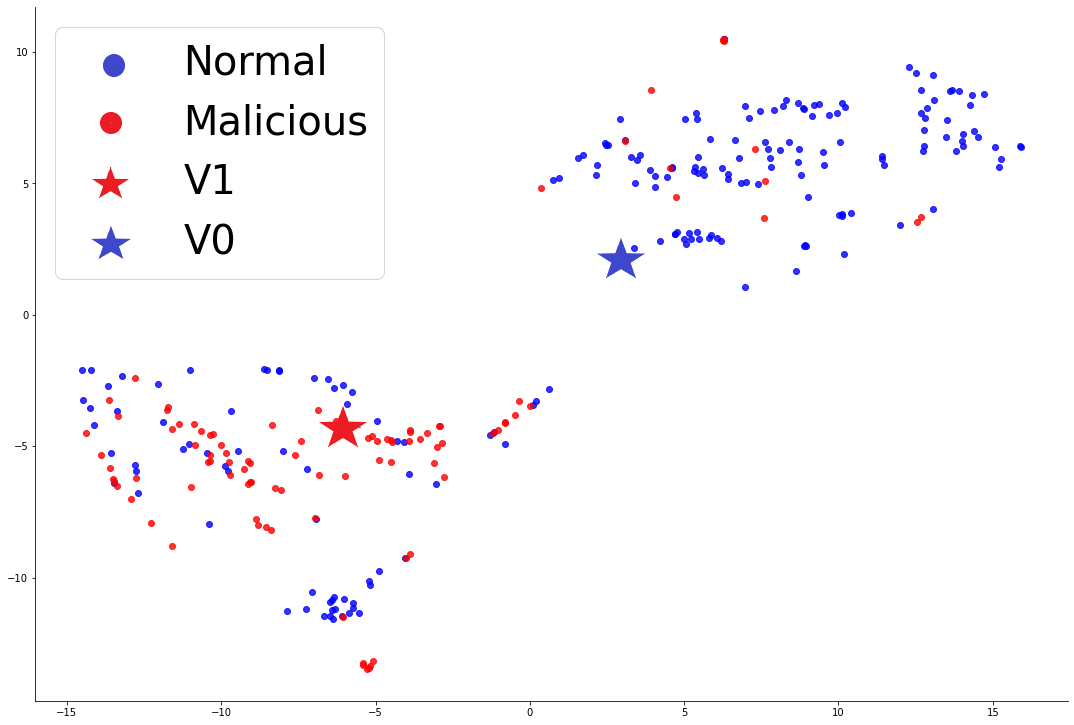}
  \caption{$\mathcal{L}_2$}
  \label{fig:stage2_beta2_4_umd}
\end{subfigure}
\caption{Visualization of test session representations for UMD-Wikipedia dataset.}
\label{fig:visualization_umd}
\end{figure*}

We employ the t-SNE technique~\citep{JMLR:v9:vandermaaten08a} for visualization. We consider two different encoded session representations based on whether our encoder is trained by $\mathcal{L}_1$ (stage 1) or $\mathcal{L}_2$ (stage 2). For the CERT dataset, we randomly sample 70 normal sessions from the test set and utilize all test malicious sessions for visualization. The visualization results for the CERT dataset are shown in Figure \ref{fig:visualization_cert}.  After stage 1 training (refer to Figure \ref{fig:stage1_cert}), there are many malicious sessions that overlap the normal session cluster. Also, $\mathbf{v}_1$ and $\mathbf{v}_0$ are closer to each other. The reason is that during stage 1 training, our encoder does not get an opportunity to contrast between unseen malicious and normal sessions. After stage 2 training (refer to Figure \ref{fig:stage2_cert}), $\mathbf{v}_1$ and $\mathbf{v}_0$ get well separated. During stage 2  training, we train our encoder by employing both similar and diverse potential malicious sessions. Therefore,  our encoder learns separable representations w.r.t the open-set fraud detection task.

For the UMD-Wikipedia dataset, we randomly sample 200 normal and 100 malicious sessions from the test set. The visualization results for UMD-Wikipedia dataset are shown in Figure \ref{fig:visualization_umd}.
After stage 1 training, as seen in Figure \ref{fig:stage1_umd}, even though $\mathbf{v}_1$ and $\mathbf{v}_0$ are not closer, the malicious and normal session clusters are not well separated. This visualization result demonstrates that stage 1 is not sufficient enough to obtain a good separation between normal and malicious sessions in the encoded representation space. After stage 2 training, as seen in Figure \ref{fig:stage2_beta2_4_umd}, both malicious and normal session clusters get well separated. 

\subsection{Training Latency Analysis}

\begin{table}[htp]
    \centering
    \caption{Training latencies of our ConRo and baselines.}
    \begin{tabular}{|c|c|c|}
    \hline
        \multirow{2}{*}{\textbf{Models}} & \multicolumn{2}{c|}{Training Latency (seconds)}  \\ \cline{2-3}
        & CERT & UMD-Wikipedia  \\
    \hline
        DeepSVDD & 2463  & 1322   \\
    \hline
        DeepSAD &  3842 &  2164  \\
    \hline
        DevNet & 1741  &  844  \\
    \hline
        CLDet &  6548  &  3864  \\    
    \hline
        Swan & 3920  &  2228  \\
    \hline
        ConRo &  40143  &  25080  \\
    \hline
    \end{tabular}%
    \label{tab:training_latency}
\end{table}

All experiments are executed on AMD EPYC (2.3 GHz) CPU server with 26 GB RAM and 226 GB hard disk. We use 150 epochs to train all baselines except CLDet. Specifically, CLDet has pre-training and fine tuning components which are trained by employing 10 and 150 epochs, respectively.  The training latencies of our ConRo framework and baselines for both CERT and UMD-Wikipedia datasets are shown in Table \ref{tab:training_latency}. ConRo incurs substantially more training cost than other baselines. The reason being that ConRo employs supervised contrastive loss which is the primary factor for this observed high training costs. However, supervised contrastive learning enables our session encoder to learn class-specific shared features,  and effectively address the session diversity challenge. Hence, ConRo is able to deliver better performance than other baselines.

\subsubsection{Ablation Analysis}
\label{sec:ablation_analysis}

\begin{table*}[ht]
\caption{Ablation analysis results (mean\textpm std).}
\label{tb:ablation_results}
\resizebox{1.0\textwidth}{!}{
\begin{tabular}{|c|c|c|c|c|c|c|c|c|c|}
\hline
\multirow{2}{*}{\textbf{Models}} & \multicolumn{3}{|c|}{\textbf{CERT}} & \multicolumn{3}{|c|}{\textbf{UMD-Wikipedia}} & \multicolumn{3}{|c|}{\textbf{Open-Stack}} \\ \cline{2-10}
 &   F1 & FPR & AUC-ROC & F1 & FPR & AUC-ROC &  F1 & FPR & AUC-ROC\\\hline
 w/o stage 1 &   18.33\textpm 1.2  &  25.10\textpm 0.3  & 78.56\textpm 1.1 &  40.23\textpm 1.3   &  28.37\textpm 1.9  &  55.55\textpm 1.1 &  14.92\textpm 1.7  &  47.14\textpm 2.3  & 49.43\textpm 2.9 \\\hline
  w/o $\mathcal{L}^{Sup}$ &    5.28\textpm 0.2  &  48.10\textpm 1.8  & 45.44\textpm 0.9  &  53.16\textpm 2.3  & 25.10\textpm 1.9  & 64.65\textpm 1.9 &  15.12\textpm 1.4  &  46.90\textpm 2.7  &  49.78\textpm 2.4\\\hline
 w/o $\mathcal{L}^{SV}$ &   20.10\textpm 1.1  &  27.05\textpm1.3  &  83.72\textpm 0.7  &   64.95\textpm 0.8  &  18.85\textpm 6.1  &  73.72\textpm 0.5 &  38.76\textpm 0.8   &  31.60\textpm 1.1  &  84.20\textpm 0.4  \\\hline
 w/o AO &   8.99\textpm 0.4  &  68.46\textpm 3.1  &  62.98\textpm 1.6  &  52.04\textpm 1.4  &  80.30\textpm 1.4   &  55.68\textpm 1.9   &  14.08\textpm 2.1 &  26.16\textpm 1.4 & 50.59\textpm 2.3  \\\hline
w/o stage 2 &    42.86\textpm 1.1   &  0.0\textpm 0.0  &  63.84\textpm 0.1   &  60.90\textpm 1.1  & 23.85\textpm 1.6  & 70.42\textpm 0.6 &  46.11\textpm 3.4  &  0.0\textpm 0.0 & 65.40\textpm 1.6 \\\hline
w/o $fp(\cdot)$ &   31.32\textpm 5.1   &  26.66\textpm 6.3  &  72.77\textpm 3.1  &  59.38\textpm 1.8  & 65.52\textpm 4.2  & 65.95\textpm 2.6  & 37.50\textpm 3.7  &  49.60\textpm 3.6  & 48.16\textpm 3.5 \\\hline
w/o $\mathbf{\widehat{G}^1(\cdot)}$ &   55.92\textpm 1.2  &  4.13\textpm 0.3  &  89.49\textpm 0.2   & 65.58\textpm 2.6  & 31.20\textpm 0.3  & 74.04\textpm 2.3   &  67.57\textpm 1.1  &  9.60\textpm 0.4 &  95.20\textpm 0.2 \\\hline
w/o $\mathbf{\widetilde{G}^1(\cdot)}$ &   44.17\textpm 1.9   &  7.10\textpm 0.6  &  88.16\textpm 0.3  &  63.40\textpm 0.8 &  31.70\textpm 4.7 &  72.10\textpm 0.6 &  52.26\textpm 1.6 & 18.10\textpm 1.4  & 90.61\textpm 0.2 \\\hline
\end{tabular}
}
\end{table*}

We conduct the ablation analysis study on our ConRo framework by ablating the following main components: stage 1, $\mathcal{L}^{Sup}$, $\mathcal{L}^{SV}$, Alternating Optimization (AO), stage 2, $fp(\cdot)$ (optimistic choice), $\widehat{G}^1(\cdot)$, and $\widetilde{G}^1(\cdot)$. The ablation analysis results are shown in Table \ref{tb:ablation_results}. 

\noindent{\textbf{W/o stage 1}.} Mean F1 scores drop to 18.33 (CERT), 40.23 (UMD-Wikipedia), and 14.92 (Open-Stack).  Stage 1 ensures that the encoder learns shared features for normal sessions. Without learning these shared features, the encoder fails to achieve tight class-specific clusters in the encoded representation space.

\noindent{\textbf{W/o $\mathcal{L}^{Sup}$}.}  Mean F1 scores drop to 5.28 (CERT), 53.16 (UMD-Wikipedia), and 15.12 (Open-Stack). Both normal and malicious sessions typically exhibit large diversity and $\mathcal{L}^{Sup}$ is essential to address this session diversity challenge. We can see that there is a significant drop in F1 scores on CERT and Open-Stack datasets but not in the case for UMD-Wikiepdia dataset. We can attribute the reason to the different characteristics of these datasets. Addressing the session diversity challenge for the normal sessions is much more critical in both CERT  and Open-Stack datasets than in the UMD-Wikipedia dataset.

\noindent{\textbf{W/o $\mathcal{L}^{SV}$}.}  Mean F1 scores drop to 20.10 (CERT), 64.95 (UMD-Wikipedia), and 38.76 (Open-Stack). The DeepSVDD loss $\mathcal{L}^{SV}$ enables the encoder to push normal sessions in a minimum volume hyper-sphere in the encoded representation space. Without this topological effect, the efficacy of stage 2 reduces because the generated diverse potential malicious sessions do not effectively cover unseen malicious sessions. 

\noindent{\textbf{W/o AO }.} By employing the joint optimization approach, mean F1 scores drop to 8.99 (CERT), 52.04 (UMD-Wikipedia), and 14.08 (Open-Stack). Optimizing DeepSVDD objective $(\mathcal{L}^{SV})$ can yield maximum benefits only when the input normal sessions have considerable shared features in the encoded representation space. Here, we jointly optimize both supervised contrastive $(\mathcal{L}^{Sup})$ and DeepSVDD objectives, and we do not specifically provide normal sessions having considerable shared features in the encoded representation space as inputs to the DeepSVDD objective. As a consequence, we can observe a significant drop in the performance.

\noindent{\textbf{W/o stage 2 }.}  Mean F1 scores drop to 42.86 (CERT), 60.90 (UMD-Wikipedia), and 46.11 (Open-Stack). In stage 1 training, our encoder learns to contrast normal sessions with few available malicious sessions having limited diversity. Stage 2 generates diverse potential malicious sessions which can be similar to  unseen malicious sessions w.r.t their encoded representations. As a consequence, our encoder can learn effective separable encoded representations.

\noindent{\textbf{W/o} {$\mathbf{fp(\cdot)}$.}  By employing the pessimistic choice, mean F1 scores drop to 31.32 (CERT), 59.38 (UMD-Wikipedia), and 37.50 (Open-Stack). Without employing $fp(\cdot)$, the encoder learns to push malicious sessions and those potential malicious sessions which are false positives, closer in the encoded representation space. Due to this improper learning effect, the encoder does not achieve effective separable encoded representations.   

\noindent{\textbf{W/o $\mathbf{\widehat{G}^1(\cdot)}$}.}  Mean F1 scores drop to 55.92 (CERT), 65.58 (UMD-Wikipedia), and 67.57 (Open-Stack). Generating similar potential malicious sessions which are similar to a seen malicious session in the encoded representation space, aids the encoder to learn more effective separable representations. 
 
\noindent{\textbf{W/o $\mathbf{\widetilde{G}^1(\cdot)}$}.}  Mean F1 scores drop to 44.17 (CERT), 63.40 (UMD-Wikipedia), and 52.26 (Open-Stack). Generating diverse potential malicious sessions which can be similar to unseen malicious sessions in the encoded representation space, aids the encoder to effectively contrast normal sessions with unseen malicious sessions.

\section{Conclusion}

In this work, we have developed a robust and open-set fraud detection framework called ConRo, which is specifically designed to operate in the scenario where only a few malicious sessions having limited diversity are available for training. We developed a training procedure for ConRo to learn separable session representations by employing effective data augmentation strategies and by the combined effect of supervised contrastive and DeepSVDD losses. We presented a theoretical analysis study to analyze the main factors influencing the generalization performance of ConRo. The empirical study on three benchmark datasets demonstrated that our ConRo can outperform state-of-the-art baselines. In our future work, we plan to extend ConRo to address specific distribution shift scenarios such as \textit{sample selection bias}. We will study how to integrate  bias correction approaches with supervised contrastive learning.

\section*{Acknowledgement}
This work was supported in part by NSF grants  1920920, 1946391 and 2103829.

\bibliographystyle{unsrtnat}
\bibliography{Remote}

\begin{thebibliography}{30}
\providecommand{\natexlab}[1]{#1}
\providecommand{\url}[1]{\texttt{#1}}
\expandafter\ifx\csname urlstyle\endcsname\relax
  \providecommand{\doi}[1]{doi: #1}\else
  \providecommand{\doi}{doi: \begingroup \urlstyle{rm}\Url}\fi

\bibitem[Yuan and Wu(2021)]{YUAN2021102221}
Shuhan Yuan and Xintao Wu.
\newblock Deep learning for insider threat detection: Review, challenges and
  opportunities.
\newblock \emph{Computers \& Security}, 2021.

\bibitem[Yuan et~al.(2020)Yuan, Zheng, Wu, and Tong]{DBLP:conf/cikm/YuanZWT20}
Shuhan Yuan, Panpan Zheng, Xintao Wu, and Hanghang Tong.
\newblock Few-shot insider threat detection.
\newblock In \emph{The 29th {ACM} International Conference on Information and
  Knowledge Management}, 2020.

\bibitem[Ruff et~al.(2018)Ruff, Vandermeulen, Goernitz, Deecke, Siddiqui,
  Binder, M{\"u}ller, and Kloft]{pmlr-v80-ruff18a}
Lukas Ruff, Robert Vandermeulen, Nico Goernitz, Lucas Deecke, Shoaib~Ahmed
  Siddiqui, Alexander Binder, Emmanuel M{\"u}ller, and Marius Kloft.
\newblock Deep one-class classification.
\newblock In \emph{Proceedings of the 35th International Conference on Machine
  Learning}, 2018.

\bibitem[Ruff et~al.(2020)Ruff, Vandermeulen, G{\"{o}}rnitz, Binder,
  M{\"{u}}ller, M{\"{u}}ller, and Kloft]{DBLP:conf/iclr/RuffVGBMMK20}
Lukas Ruff, Robert~A. Vandermeulen, Nico G{\"{o}}rnitz, Alexander Binder,
  Emmanuel M{\"{u}}ller, Klaus{-}Robert M{\"{u}}ller, and Marius Kloft.
\newblock Deep semi-supervised anomaly detection.
\newblock In \emph{8th International Conference on Learning Representations},
  2020.

\bibitem[Pang et~al.(2021{\natexlab{a}})Pang, Ding, Shen, and van~den
  Hengel]{DBLP:journals/corr/abs-2108-00462}
Guansong Pang, Choubo Ding, Chunhua Shen, and Anton van~den Hengel.
\newblock Explainable deep few-shot anomaly detection with deviation networks.
\newblock \emph{CoRR}, abs/2108.00462, 2021{\natexlab{a}}.

\bibitem[Pang et~al.(2019)Pang, Shen, and van~den
  Hengel]{10.1145/3292500.3330871}
Guansong Pang, Chunhua Shen, and Anton van~den Hengel.
\newblock Deep anomaly detection with deviation networks.
\newblock In \emph{Proceedings of the 25th ACM SIGKDD International Conference
  on Knowledge Discovery \& Data Mining}, 2019.

\bibitem[Ding et~al.(2022)Ding, Pang, and Shen]{9879727}
Choubo Ding, Guansong Pang, and Chunhua Shen.
\newblock Catching both gray and black swans: Open-set supervised anomaly
  detection.
\newblock In \emph{2022 IEEE/CVF Conference on Computer Vision and Pattern
  Recognition (CVPR)}, 2022.

\bibitem[Khosla et~al.(2020)Khosla, Teterwak, Wang, Sarna, Tian, Isola,
  Maschinot, Liu, and Krishnan]{DBLP:conf/nips/KhoslaTWSTIMLK20}
Prannay Khosla, Piotr Teterwak, Chen Wang, Aaron Sarna, Yonglong Tian, Phillip
  Isola, Aaron Maschinot, Ce~Liu, and Dilip Krishnan.
\newblock Supervised contrastive learning.
\newblock In \emph{Advances in Neural Information Processing Systems 33: Annual
  Conference on Neural Information Processing Systems}, 2020.

\bibitem[Glasser and Lindauer(2013)]{DBLP:conf/sp/GlasserL13}
Joshua Glasser and Brian Lindauer.
\newblock Bridging the gap: {A} pragmatic approach to generating insider threat
  data.
\newblock In \emph{{IEEE} Symposium on Security and Privacy Workshops}, 2013.

\bibitem[Kumar et~al.(2015)Kumar, Spezzano, and
  Subrahmanian]{10.1145/2783258.2783367}
Srijan Kumar, Francesca Spezzano, and V.S. Subrahmanian.
\newblock Vews: A wikipedia vandal early warning system.
\newblock In \emph{Proceedings of the 21th ACM SIGKDD International Conference
  on Knowledge Discovery and Data Mining}, 2015.

\bibitem[Du et~al.(2017)Du, Li, Zheng, and Srikumar]{DBLP:conf/ccs/Du0ZS17}
Min Du, Feifei Li, Guineng Zheng, and Vivek Srikumar.
\newblock Deeplog: Anomaly detection and diagnosis from system logs through
  deep learning.
\newblock In \emph{Proceedings of the {ACM} {SIGSAC} Conference on Computer and
  Communications Security, {CCS}}, 2017.

\bibitem[Pang et~al.(2021{\natexlab{b}})Pang, Shen, Cao, and
  Hengel]{10.1145/3439950}
Guansong Pang, Chunhua Shen, Longbing Cao, and Anton Van~Den Hengel.
\newblock Deep learning for anomaly detection: A review.
\newblock \emph{ACM Comput. Surv.}, 2021{\natexlab{b}}.

\bibitem[Ruff et~al.(2021)Ruff, Kauffmann, Vandermeulen, Montavon, Samek,
  Kloft, Dietterich, and M{\"u}ller]{ruff2021unifying}
Lukas Ruff, Jacob~R Kauffmann, Robert~A Vandermeulen, Gr{\'e}goire Montavon,
  Wojciech Samek, Marius Kloft, Thomas~G Dietterich, and Klaus-Robert
  M{\"u}ller.
\newblock A unifying review of deep and shallow anomaly detection.
\newblock \emph{Proceedings of the IEEE}, 109\penalty0 (5):\penalty0 756--795,
  2021.

\bibitem[Pang et~al.(2018)Pang, Cao, Chen, and Liu]{DBLP:conf/kdd/PangCCL18}
Guansong Pang, Longbing Cao, Ling Chen, and Huan Liu.
\newblock Learning representations of ultrahigh-dimensional data for random
  distance-based outlier detection.
\newblock In \emph{Proceedings of the 24th {ACM} {SIGKDD} International
  Conference on Knowledge Discovery {\&} Data Mining}, 2018.

\bibitem[Vinay et~al.(2022)Vinay, Yuan, and Wu]{DBLP:conf/dasfaa/VinayYW22}
M.~S. Vinay, Shuhan Yuan, and Xintao Wu.
\newblock Contrastive learning for insider threat detection.
\newblock In \emph{Database Systems for Advanced Applications - 27th
  International Conference}, 2022.

\bibitem[Zhou et~al.(2022)Zhou, Wang, Yang, and Zhan]{9776049}
Shaolei Zhou, Liming Wang, Jing Yang, and Pengwei Zhan.
\newblock {SITD:} insider threat detection using siamese architecture on
  imbalanced data.
\newblock In \emph{IEEE 25th International Conference on Computer Supported
  Cooperative Work in Design}, 2022.

\bibitem[Guo et~al.(2021)Guo, Yuan, and Wu]{DBLP:journals/corr/abs-2103-04475}
Haixuan Guo, Shuhan Yuan, and Xintao Wu.
\newblock Logbert: Log anomaly detection via {BERT}.
\newblock \emph{CoRR}, abs/2103.04475, 2021.

\bibitem[Qi et~al.(2022)Qi, Luan, Huang, Wang, Fung, Yang, and Qian]{9789917}
Jiaxing Qi, Zhongzhi Luan, Shaohan Huang, Yukun Wang, Carol Fung, Hailong Yang,
  and Depei Qian.
\newblock Adanomaly: Adaptive anomaly detection for system logs with
  adversarial learning.
\newblock In \emph{IEEE/IFIP Network Operations and Management Symposium},
  2022.

\bibitem[Salehi et~al.(2021)Salehi, Mirzaei, Hendrycks, Li, Rohban, and
  Sabokrou]{DBLP:journals/corr/abs-2110-14051}
Mohammadreza Salehi, Hossein Mirzaei, Dan Hendrycks, Yixuan Li,
  Mohammad~Hossein Rohban, and Mohammad Sabokrou.
\newblock A unified survey on anomaly, novelty, open-set, and
  out-of-distribution detection: Solutions and future challenges.
\newblock \emph{CoRR}, abs/2110.14051, 2021.

\bibitem[Pang et~al.(2021{\natexlab{c}})Pang, van~den Hengel, Shen, and
  Cao]{DBLP:conf/kdd/PangHSC21}
Guansong Pang, Anton van~den Hengel, Chunhua Shen, and Longbing Cao.
\newblock Toward deep supervised anomaly detection: Reinforcement learning from
  partially labeled anomaly data.
\newblock In \emph{The 27th {ACM} {SIGKDD} Conference on Knowledge Discovery
  and Data Mining}, 2021{\natexlab{c}}.

\bibitem[Hendrycks et~al.(2019)Hendrycks, Mazeika, and
  Dietterich]{DBLP:conf/iclr/HendrycksMD19}
Dan Hendrycks, Mantas Mazeika, and Thomas~G. Dietterich.
\newblock Deep anomaly detection with outlier exposure.
\newblock In \emph{7th International Conference on Learning Representations},
  2019.

\bibitem[Pang et~al.(2023)Pang, Shen, Jin, and van~den
  Hengel]{10.1145/3580305.3599302}
Guansong Pang, Chunhua Shen, Huidong Jin, and Anton van~den Hengel.
\newblock Deep weakly-supervised anomaly detection.
\newblock In \emph{Proceedings of the 29th ACM SIGKDD Conference on Knowledge
  Discovery and Data Mining}, 2023.

\bibitem[Jaiswal et~al.(2020)Jaiswal, Babu, Zadeh, Banerjee, and
  Makedon]{DBLP:journals/corr/abs-2011-00362}
Ashish Jaiswal, Ashwin~Ramesh Babu, Mohammad~Zaki Zadeh, Debapriya Banerjee,
  and Fillia Makedon.
\newblock A survey on contrastive self-supervised learning.
\newblock \emph{CoRR}, abs/2011.00362, 2020.

\bibitem[Ju et~al.(2020)Ju, Lee, Hwang, Namkung, and Yu]{JU2020167}
Hyunjun Ju, Dongha Lee, Junyoung Hwang, Junghyun Namkung, and Hwanjo Yu.
\newblock Pumad: Pu metric learning for anomaly detection.
\newblock \emph{Information Sciences}, 2020.

\bibitem[Verma et~al.(2021)Verma, Luong, Kawaguchi, Pham, and
  Le]{DBLP:conf/icml/VermaLKPL21}
Vikas Verma, Thang Luong, Kenji Kawaguchi, Hieu Pham, and Quoc~V. Le.
\newblock Towards domain-agnostic contrastive learning.
\newblock In \emph{Proceedings of the 38th International Conference on Machine
  Learning}, 2021.

\bibitem[Du and Wu(2021)]{10.1145/3459637.3482104}
Wei Du and Xintao Wu.
\newblock Fair and robust classification under sample selection bias.
\newblock In \emph{Proceedings of the 30th ACM International Conference on
  Information \& Knowledge Management}, 2021.

\bibitem[Rousseeuw(1987)]{ROUSSEEUW198753}
Peter~J. Rousseeuw.
\newblock Silhouettes: A graphical aid to the interpretation and validation of
  cluster analysis.
\newblock \emph{Journal of Computational and Applied Mathematics}, 1987.

\bibitem[Mikolov et~al.(2013)Mikolov, Chen, Corrado, and
  Dean]{mikolov2013efficient}
Tomas Mikolov, Kai Chen, Greg Corrado, and Jeffrey Dean.
\newblock Efficient estimation of word representations in vector space.
\newblock \emph{arXiv preprint arXiv:1301.3781}, 2013.

\bibitem[Kingma and Ba(2015)]{DBLP:journals/corr/KingmaB14}
Diederik~P. Kingma and Jimmy Ba.
\newblock Adam: {A} method for stochastic optimization.
\newblock In \emph{3rd International Conference on Learning Representations,
  {ICLR}}, 2015.

\bibitem[van~der Maaten and Hinton(2008)]{JMLR:v9:vandermaaten08a}
Laurens van~der Maaten and Geoffrey Hinton.
\newblock Visualizing data using t-sne.
\newblock \emph{Journal of Machine Learning Research}, 2008.

\end{thebibliography}

\end{document}